

\documentclass[preprint,12pt]{elsarticle}





\usepackage{paralist}
\usepackage{verbatim}
\usepackage{url}
\usepackage{epsfig}

\usepackage{algorithm}
\usepackage[noend]{algpseudocode}

\usepackage{amsmath}
\usepackage{amssymb}
\usepackage{wasysym}
\usepackage{amsthm}
\usepackage{multirow}

\def\PsfigVersion{1.9}
\ifx\undefined\psfig\else \fi

%

\let\LaTeXAtSign=\@
\let\@=\relax
\edef\psfigRestoreAt{\catcode`\@=\number\catcode`@\relax}
\catcode`\@=11\relax
\newwrite\@unused
\def\ps@typeout#1{{\let\protect\string\immediate\write\@unused{#1}}}
\ps@typeout{psfig/tex \PsfigVersion}


\def\figurepath{./}

%
%
\def\@nnil{\@nil}
\def\@empty{}
\def\@psdonoop#1\@@#2#3{}
\def\@psdo#1:=#2\do#3{\edef\@psdotmp{#2}\ifx\@psdotmp\@empty \else
    \expandafter\@psdoloop#2,\@nil,\@nil\@@#1{#3}\fi}
\def\@psdoloop#1,#2,#3\@@#4#5{\def#4{#1}\ifx #4\@nnil \else
       #5\def#4{#2}\ifx #4\@nnil \else#5\@ipsdoloop #3\@@#4{#5}\fi\fi}
\def\@ipsdoloop#1,#2\@@#3#4{\def#3{#1}\ifx #3\@nnil 
       \let\@nextwhile=\@psdonoop \else
      #4\relax\let\@nextwhile=\@ipsdoloop\fi\@nextwhile#2\@@#3{#4}}
\def\@tpsdo#1:=#2\do#3{\xdef\@psdotmp{#2}\ifx\@psdotmp\@empty \else
    \@tpsdoloop#2\@nil\@nil\@@#1{#3}\fi}
\def\@tpsdoloop#1#2\@@#3#4{\def#3{#1}\ifx #3\@nnil 
       \let\@nextwhile=\@psdonoop \else
      #4\relax\let\@nextwhile=\@tpsdoloop\fi\@nextwhile#2\@@#3{#4}}
%
\ifx\undefined\fbox
\newdimen\fboxrule
\newdimen\fboxsep
\newdimen\ps@tempdima
\newbox\ps@tempboxa
\fboxsep = 3pt
\fboxrule = .4pt
\long\def\fbox#1{\leavevmode\setbox\ps@tempboxa\hbox{#1}\ps@tempdima\fboxrule
    \advance\ps@tempdima \fboxsep \advance\ps@tempdima \dp\ps@tempboxa
   \hbox{\lower \ps@tempdima\hbox
  {\vbox{\hrule height \fboxrule
          \hbox{\vrule width \fboxrule \hskip\fboxsep
          \vbox{\vskip\fboxsep \box\ps@tempboxa\vskip\fboxsep}\hskip 
                 \fboxsep\vrule width \fboxrule}
                 \hrule height \fboxrule}}}}
\fi
%
%
\newread\ps@stream
\newif\ifnot@eof       
\newif\if@noisy        
\newif\if@atend        
\newif\if@psfile       
%
%
{\catcode`\%=12\global\gdef\epsf@start{
\def\epsf@PS{PS}
\def\epsf@getbb#1{%
%
%
\openin\ps@stream=#1
\ifeof\ps@stream\ps@typeout{Error, File #1 not found}\else
%
%
   {\not@eoftrue \chardef\other=12
    \def\do##1{\catcode`##1=\other}\dospecials \catcode`\ =10
    \loop
       \if@psfile
	  \read\ps@stream to \epsf@fileline
       \else{
	  \obeyspaces
          \read\ps@stream to \epsf@tmp\global\let\epsf@fileline\epsf@tmp}
       \fi
       \ifeof\ps@stream\not@eoffalse\else
%
%
       \if@psfile\else
       \expandafter\epsf@test\epsf@fileline:. \\%
       \fi
%
%
          \expandafter\epsf@aux\epsf@fileline:. \\%
       \fi
   \ifnot@eof\repeat
   }\closein\ps@stream\fi}%
%
%
\long\def\epsf@test#1#2#3:#4\\{\def\epsf@testit{#1#2}
			\ifx\epsf@testit\epsf@start\else
\ps@typeout{Warning! File does not start with `\epsf@start'.  It may not be a PostScript file.}
			\fi
			\@psfiletrue} 
%
%
{\catcode`\%=12\global\let\epsf@percent=
%
%
%
\long\def\epsf@aux#1#2:#3\\{\ifx#1\epsf@percent
   \def\epsf@testit{#2}\ifx\epsf@testit\epsf@bblit
	\@atendfalse
        \epsf@atend #3 . \\%
	\if@atend	
	   \if@verbose{
		\ps@typeout{psfig: found `(atend)'; continuing search}
	   }\fi
        \else
        \epsf@grab #3 . . . \\%
        \not@eoffalse
        \global\no@bbfalse
        \fi
   \fi\fi}%
%
%
\def\epsf@grab #1 #2 #3 #4 #5\\{%
   \global\def\epsf@llx{#1}\ifx\epsf@llx\empty
      \epsf@grab #2 #3 #4 #5 .\\\else
   \global\def\epsf@lly{#2}%
   \global\def\epsf@urx{#3}\global\def\epsf@ury{#4}\fi}%
%
%
\def\epsf@atendlit{(atend)} 
\def\epsf@atend #1 #2 #3\\{%
   \def\epsf@tmp{#1}\ifx\epsf@tmp\empty
      \epsf@atend #2 #3 .\\\else
   \ifx\epsf@tmp\epsf@atendlit\@atendtrue\fi\fi}


\chardef\psletter = 11 
\chardef\other = 12

\newif \ifdebug 
\newif\ifc@mpute 
\c@mputetrue 

\let\then = \relax
\def\r@dian{pt }
\let\r@dians = \r@dian
\let\dimensionless@nit = \r@dian
\let\dimensionless@nits = \dimensionless@nit
\def\internal@nit{sp }
\let\internal@nits = \internal@nit
\newif\ifstillc@nverging
\def \Mess@ge #1{\ifdebug \then \message {#1} \fi}

{ 
	\catcode `\@ = \psletter
	\gdef \nodimen {\expandafter \n@dimen \the \dimen}
	\gdef \term #1 #2 #3%
	       {\edef \t@ {\the #1}
		\edef \t@@ {\expandafter \n@dimen \the #2\r@dian}%
		\t@rm {\t@} {\t@@} {#3}%
	       }
	\gdef \t@rm #1 #2 #3%
	       {{%
		\count 0 = 0
		\dimen 0 = 1 \dimensionless@nit
		\dimen 2 = #2\relax
		\Mess@ge {Calculating term #1 of \nodimen 2}%
		\loop
		\ifnum	\count 0 < #1
		\then	\advance \count 0 by 1
			\Mess@ge {Iteration \the \count 0 \space}%
			\Multiply \dimen 0 by {\dimen 2}%
			\Mess@ge {After multiplication, term = \nodimen 0}%
			\Divide \dimen 0 by {\count 0}%
			\Mess@ge {After division, term = \nodimen 0}%
		\repeat
		\Mess@ge {Final value for term #1 of 
				\nodimen 2 \space is \nodimen 0}%
		\xdef \Term {#3 = \nodimen 0 \r@dians}%
		\aftergroup \Term
	       }}
	\catcode `\p = \other
	\catcode `\t = \other
	\gdef \n@dimen #1pt{#1} 
}

\def \Divide #1by #2{\divide #1 by #2} 

\def \Multiply #1by #2
       {{
	\count 0 = #1\relax
	\count 2 = #2\relax
	\count 4 = 65536
	\Mess@ge {Before scaling, count 0 = \the \count 0 \space and
			count 2 = \the \count 2}%
	\ifnum	\count 0 > 32767 
	\then	\divide \count 0 by 4
		\divide \count 4 by 4
	\else	\ifnum	\count 0 < -32767
		\then	\divide \count 0 by 4
			\divide \count 4 by 4
		\else
		\fi
	\fi
	\ifnum	\count 2 > 32767 
	\then	\divide \count 2 by 4
		\divide \count 4 by 4
	\else	\ifnum	\count 2 < -32767
		\then	\divide \count 2 by 4
			\divide \count 4 by 4
		\else
		\fi
	\fi
	\multiply \count 0 by \count 2
	\divide \count 0 by \count 4
	\xdef \product {#1 = \the \count 0 \internal@nits}%
	\aftergroup \product
       }}

\def\r@duce{\ifdim\dimen0 > 90\r@dian \then   
		\multiply\dimen0 by -1
		\advance\dimen0 by 180\r@dian
		\r@duce
	    \else \ifdim\dimen0 < -90\r@dian \then  
		\advance\dimen0 by 360\r@dian
		\r@duce
		\fi
	    \fi}

\def\Sine#1%
       {{%
	\dimen 0 = #1 \r@dian
	\r@duce
	\ifdim\dimen0 = -90\r@dian \then
	   \dimen4 = -1\r@dian
	   \c@mputefalse
	\fi
	\ifdim\dimen0 = 90\r@dian \then
	   \dimen4 = 1\r@dian
	   \c@mputefalse
	\fi
	\ifdim\dimen0 = 0\r@dian \then
	   \dimen4 = 0\r@dian
	   \c@mputefalse
	\fi
	\ifc@mpute \then
		\divide\dimen0 by 180
		\dimen0=3.141592654\dimen0
		\dimen 2 = 3.1415926535897963\r@dian 
		\divide\dimen 2 by 2 
		\Mess@ge {Sin: calculating Sin of \nodimen 0}%
		\count 0 = 1 
		\dimen 2 = 1 \r@dian 
		\dimen 4 = 0 \r@dian 
		\loop
			\ifnum	\dimen 2 = 0 
			\then	\stillc@nvergingfalse 
			\else	\stillc@nvergingtrue
			\fi
			\ifstillc@nverging 
			\then	\term {\count 0} {\dimen 0} {\dimen 2}%
				\advance \count 0 by 2
				\count 2 = \count 0
				\divide \count 2 by 2
				\ifodd	\count 2 
				\then	\advance \dimen 4 by \dimen 2
				\else	\advance \dimen 4 by -\dimen 2
				\fi
		\repeat
	\fi		
			\xdef \sine {\nodimen 4}%
       }}

\def\Cosine#1{\ifx\sine\UnDefined\edef\Savesine{\relax}\else
		             \edef\Savesine{\sine}\fi
	{\dimen0=#1\r@dian\advance\dimen0 by 90\r@dian
	 \Sine{\nodimen 0}
	 \xdef\cosine{\sine}
	 \xdef\sine{\Savesine}}}	      

\def\psdraft{
	\def\@psdraft{0}
}
\def\psfull{
	\def\@psdraft{100}
}

\psfull

\newif\if@scalefirst
\def\psscalefirst{\@scalefirsttrue}
\def\psrotatefirst{\@scalefirstfalse}
\psrotatefirst

\newif\if@draftbox
\def\psnodraftbox{
	\@draftboxfalse
}
\def\psdraftbox{
	\@draftboxtrue
}
\@draftboxtrue

\newif\if@prologfile
\newif\if@postlogfile
\def\pssilent{
	\@noisyfalse
}
\def\psnoisy{
	\@noisytrue
}
\psnoisy
\newif\if@bbllx
\newif\if@bblly
\newif\if@bburx
\newif\if@bbury
\newif\if@height
\newif\if@width
\newif\if@rheight
\newif\if@rwidth
\newif\if@angle
\newif\if@clip
\newif\if@verbose
\def\@p@@sclip#1{\@cliptrue}

\newif\if@decmpr


\def\@p@@sfigure#1{\def\@p@sfile{null}\def\@p@sbbfile{null}
	        \openin1=#1.bb
		\ifeof1\closein1
	        	\openin1=\figurepath#1.bb
			\ifeof1\closein1
			        \openin1=#1
				\ifeof1\closein1%
				       \openin1=\figurepath#1
					\ifeof1
					   \ps@typeout{Error, File #1 not found}
						\if@bbllx\if@bblly
				   		\if@bburx\if@bbury
			      				\def\@p@sfile{#1}%
			      				\def\@p@sbbfile{#1}%
							\@decmprfalse
				  	   	\fi\fi\fi\fi
					\else\closein1
				    		\def\@p@sfile{\figurepath#1}%
				    		\def\@p@sbbfile{\figurepath#1}%
						\@decmprfalse
	                       		\fi%
			 	\else\closein1%
					\def\@p@sfile{#1}
					\def\@p@sbbfile{#1}
					\@decmprfalse
			 	\fi
			\else
				\def\@p@sfile{\figurepath#1}
				\def\@p@sbbfile{\figurepath#1.bb}
				\@decmprtrue
			\fi
		\else
			\def\@p@sfile{#1}
			\def\@p@sbbfile{#1.bb}
			\@decmprtrue
		\fi}

\def\@p@@sfile#1{\@p@@sfigure{#1}}

\def\@p@@sbbllx#1{
		\@bbllxtrue
		\dimen100=#1
		\edef\@p@sbbllx{\number\dimen100}
}
\def\@p@@sbblly#1{
		\@bbllytrue
		\dimen100=#1
		\edef\@p@sbblly{\number\dimen100}
}
\def\@p@@sbburx#1{
		\@bburxtrue
		\dimen100=#1
		\edef\@p@sbburx{\number\dimen100}
}
\def\@p@@sbbury#1{
		\@bburytrue
		\dimen100=#1
		\edef\@p@sbbury{\number\dimen100}
}
\def\@p@@sheight#1{
		\@heighttrue
		\dimen100=#1
   		\edef\@p@sheight{\number\dimen100}
}
\def\@p@@swidth#1{
		\@widthtrue
		\dimen100=#1
		\edef\@p@swidth{\number\dimen100}
}
\def\@p@@srheight#1{
		\@rheighttrue
		\dimen100=#1
		\edef\@p@srheight{\number\dimen100}
}
\def\@p@@srwidth#1{
		\@rwidthtrue
		\dimen100=#1
		\edef\@p@srwidth{\number\dimen100}
}
\def\@p@@sangle#1{
		\@angletrue
		\edef\@p@sangle{#1} 
}
\def\@p@@ssilent#1{ 
		\@verbosefalse
}
\def\@p@@sprolog#1{\@prologfiletrue\def\@prologfileval{#1}}
\def\@p@@spostlog#1{\@postlogfiletrue\def\@postlogfileval{#1}}
\def\@cs@name#1{\csname #1\endcsname}
\def\@setparms#1=#2,{\@cs@name{@p@@s#1}{#2}}
%
%
\def\ps@init@parms{
		\@bbllxfalse \@bbllyfalse
		\@bburxfalse \@bburyfalse
		\@heightfalse \@widthfalse
		\@rheightfalse \@rwidthfalse
		\def\@p@sbbllx{}\def\@p@sbblly{}
		\def\@p@sbburx{}\def\@p@sbbury{}
		\def\@p@sheight{}\def\@p@swidth{}
		\def\@p@srheight{}\def\@p@srwidth{}
		\def\@p@sangle{0}
		\def\@p@sfile{} \def\@p@sbbfile{}
		\def\@p@scost{10}
		\def\@sc{}
		\@prologfilefalse
		\@postlogfilefalse
		\@clipfalse
		\if@noisy
			\@verbosetrue
		\else
			\@verbosefalse
		\fi
}
%
%
\def\parse@ps@parms#1{
	 	\@psdo\@psfiga:=#1\do
		   {\expandafter\@setparms\@psfiga,}}
%
%
\newif\ifno@bb
\def\bb@missing{
	\if@verbose{
		\ps@typeout{psfig: searching \@p@sbbfile \space  for bounding box}
	}\fi
	\no@bbtrue
	\epsf@getbb{\@p@sbbfile}
        \ifno@bb \else \bb@cull\epsf@llx\epsf@lly\epsf@urx\epsf@ury\fi
}	
\def\bb@cull#1#2#3#4{
	\dimen100=#1 bp\edef\@p@sbbllx{\number\dimen100}
	\dimen100=#2 bp\edef\@p@sbblly{\number\dimen100}
	\dimen100=#3 bp\edef\@p@sbburx{\number\dimen100}
	\dimen100=#4 bp\edef\@p@sbbury{\number\dimen100}
	\no@bbfalse
}
\newdimen\p@intvaluex
\newdimen\p@intvaluey
\def\rotate@#1#2{{\dimen0=#1 sp\dimen1=#2 sp
		  \global\p@intvaluex=\cosine\dimen0
		  \dimen3=\sine\dimen1
		  \global\advance\p@intvaluex by -\dimen3
		  \global\p@intvaluey=\sine\dimen0
		  \dimen3=\cosine\dimen1
		  \global\advance\p@intvaluey by \dimen3
		  }}
\def\compute@bb{
		\no@bbfalse
		\if@bbllx \else \no@bbtrue \fi
		\if@bblly \else \no@bbtrue \fi
		\if@bburx \else \no@bbtrue \fi
		\if@bbury \else \no@bbtrue \fi
		\ifno@bb \bb@missing \fi
		\ifno@bb \ps@typeout{FATAL ERROR: no bb supplied or found}
			\no-bb-error
		\fi
		%
%
		\count203=\@p@sbburx
		\count204=\@p@sbbury
		\advance\count203 by -\@p@sbbllx
		\advance\count204 by -\@p@sbblly
		\edef\ps@bbw{\number\count203}
		\edef\ps@bbh{\number\count204}
		\if@angle 
			\Sine{\@p@sangle}\Cosine{\@p@sangle}
	        	{\dimen100=\maxdimen\xdef\r@p@sbbllx{\number\dimen100}
					    \xdef\r@p@sbblly{\number\dimen100}
			                    \xdef\r@p@sbburx{-\number\dimen100}
					    \xdef\r@p@sbbury{-\number\dimen100}}
%
                        \def\minmaxtest{
			   \ifnum\number\p@intvaluex<\r@p@sbbllx
			      \xdef\r@p@sbbllx{\number\p@intvaluex}\fi
			   \ifnum\number\p@intvaluex>\r@p@sbburx
			      \xdef\r@p@sbburx{\number\p@intvaluex}\fi
			   \ifnum\number\p@intvaluey<\r@p@sbblly
			      \xdef\r@p@sbblly{\number\p@intvaluey}\fi
			   \ifnum\number\p@intvaluey>\r@p@sbbury
			      \xdef\r@p@sbbury{\number\p@intvaluey}\fi
			   }
			\rotate@{\@p@sbbllx}{\@p@sbblly}
			\minmaxtest
			\rotate@{\@p@sbbllx}{\@p@sbbury}
			\minmaxtest
			\rotate@{\@p@sbburx}{\@p@sbblly}
			\minmaxtest
			\rotate@{\@p@sbburx}{\@p@sbbury}
			\minmaxtest
			\edef\@p@sbbllx{\r@p@sbbllx}\edef\@p@sbblly{\r@p@sbblly}
			\edef\@p@sbburx{\r@p@sbburx}\edef\@p@sbbury{\r@p@sbbury}
		\fi
		\count203=\@p@sbburx
		\count204=\@p@sbbury
		\advance\count203 by -\@p@sbbllx
		\advance\count204 by -\@p@sbblly
		\edef\@bbw{\number\count203}
		\edef\@bbh{\number\count204}
}
%
%
\def\in@hundreds#1#2#3{\count240=#2 \count241=#3
		     \count100=\count240	
		     \divide\count100 by \count241
		     \count101=\count100
		     \multiply\count101 by \count241
		     \advance\count240 by -\count101
		     \multiply\count240 by 10
		     \count101=\count240	
		     \divide\count101 by \count241
		     \count102=\count101
		     \multiply\count102 by \count241
		     \advance\count240 by -\count102
		     \multiply\count240 by 10
		     \count102=\count240	
		     \divide\count102 by \count241
		     \count200=#1\count205=0
		     \count201=\count200
			\multiply\count201 by \count100
		 	\advance\count205 by \count201
		     \count201=\count200
			\divide\count201 by 10
			\multiply\count201 by \count101
			\advance\count205 by \count201
		     \count201=\count200
			\divide\count201 by 100
			\multiply\count201 by \count102
			\advance\count205 by \count201
		     \edef\@result{\number\count205}
}
\def\compute@wfromh{
		\in@hundreds{\@p@sheight}{\@bbw}{\@bbh}
		\edef\@p@swidth{\@result}
}
\def\compute@hfromw{
	        \in@hundreds{\@p@swidth}{\@bbh}{\@bbw}
		\edef\@p@sheight{\@result}
}
\def\compute@handw{
		\if@height 
			\if@width
			\else
				\compute@wfromh
			\fi
		\else 
			\if@width
				\compute@hfromw
			\else
				\edef\@p@sheight{\@bbh}
				\edef\@p@swidth{\@bbw}
			\fi
		\fi
}
\def\compute@resv{
		\if@rheight \else \edef\@p@srheight{\@p@sheight} \fi
		\if@rwidth \else \edef\@p@srwidth{\@p@swidth} \fi
}
%
\def\compute@sizes{
	\compute@bb
	\if@scalefirst\if@angle
	\if@width
	   \in@hundreds{\@p@swidth}{\@bbw}{\ps@bbw}
	   \edef\@p@swidth{\@result}
	\fi
	\if@height
	   \in@hundreds{\@p@sheight}{\@bbh}{\ps@bbh}
	   \edef\@p@sheight{\@result}
	\fi
	\fi\fi
	\compute@handw
	\compute@resv}

%
%
\def\psfig#1{\vbox {
	%
	\ps@init@parms
	\parse@ps@parms{#1}
	\compute@sizes
	\ifnum\@p@scost<\@psdraft{
		\special{ps::[begin] 	\@p@swidth \space \@p@sheight \space
				\@p@sbbllx \space \@p@sbblly \space
				\@p@sbburx \space \@p@sbbury \space
				startTexFig \space }
		\if@angle
			\special {ps:: \@p@sangle \space rotate \space} 
		\fi
		\if@clip{
			\if@verbose{
				\ps@typeout{(clip)}
			}\fi
			\special{ps:: doclip \space }
		}\fi
		\if@prologfile
		    \special{ps: plotfile \@prologfileval \space } \fi
		\if@decmpr{
			\if@verbose{
				\ps@typeout{psfig: including \@p@sfile.Z \space }
			}\fi
			\special{ps: plotfile "`zcat \@p@sfile.Z" \space }
		}\else{
			\if@verbose{
				\ps@typeout{psfig: including \@p@sfile \space }
			}\fi
			\special{ps: plotfile \@p@sfile \space }
		}\fi
		\if@postlogfile
		    \special{ps: plotfile \@postlogfileval \space } \fi
		\special{ps::[end] endTexFig \space }
		\vbox to \@p@srheight sp{
			\hbox to \@p@srwidth sp{
				\hss
			}
		\vss
		}
	}\else{
		\if@draftbox{		
			\hbox{\frame{\vbox to \@p@srheight sp{
			\vss
			\hbox to \@p@srwidth sp{ \hss \@p@sfile \hss }
			\vss
			}}}
		}\else{
			\vbox to \@p@srheight sp{
			\vss
			\hbox to \@p@srwidth sp{\hss}
			\vss
			}
		}\fi

	}\fi
}}
\psfigRestoreAt
\let\@=\LaTeXAtSign

\usepackage{graphicx}
\DeclareGraphicsExtensions{.pdf,.png,.jpg}
\usepackage{epsfig}
\usepackage{subfigure} 



\newtheorem{lemma}{Lemma}
\newtheorem{corollary}[lemma]{Corollary}



\journal{Parallel Computing: SI: Scientific graph analysis}

\begin{document}

\begin{frontmatter}



\title{Parallel Heuristics for Scalable Community Detection}


\author{
Hao Lu 
}
\ead{luhowardmark@wsu.edu}
\address{School of Electrical Engineering and Computer Science, 
Washington State University, Pullman WA 99164
}

\author{
Mahantesh Halappanavar 
}
\ead{hala@pnnl.gov}
\address{
Computational Sciences and Mathematics Division, Pacific Northwest National Laboratory, Richland, WA 99354
}

\author{
Ananth Kalyanaraman\corref{cor1}
}
\ead{ananth@eecs.wsu.edu}
\cortext[cor1]{Corresponding author}
\address{School of Electrical Engineering and Computer Science, 
Washington State University, Pullman WA 99164
}

\begin{abstract}
Community detection has become a fundamental operation in numerous graph-theoretic applications. It is used to reveal natural divisions that exist within real world networks without imposing prior size or cardinality constraints on the set of communities. 
Despite its potential for application, there is only limited support for community detection on large-scale parallel computers, largely owing to the irregular and  inherently sequential nature of the underlying heuristics. 
In this paper, we present parallelization heuristics for fast community detection using the {\it Louvain} method as the serial template. The Louvain method is an iterative heuristic for  modularity optimization. Originally developed by Blondel {\it et al.} in 2008, the method has become increasingly popular owing to its ability to detect high modularity community partitions in a fast and memory-efficient manner. However, the method is also inherently sequential, thereby limiting its scalability. Here, we observe certain key properties of this method that present challenges for its parallelization, and consequently propose heuristics that are designed to break the sequential barrier. For evaluation purposes, we implemented our heuristics using OpenMP multithreading, and tested them over real world graphs derived from multiple application domains (e.g., internet, citation, biological). 
Compared to the serial Louvain implementation, our parallel implementation is able to produce community outputs with a higher modularity for most of the inputs tested, in comparable number or fewer iterations, while providing absolute speedups of up to $16\times$ using 32 threads. 
\end{abstract}

\begin{keyword}
community detection
\sep
parallel graph heuristics
\sep
graph coloring
\sep
graph clustering



\end{keyword}

\end{frontmatter}



\section{Introduction}
\label{secIntro}

Community detection, or graph clustering, is becoming pervasive in the data analytics of various fields including (but not limited to) scientific computing, life sciences, social network analysis, and internet applications \cite{fortunato_community_2010}. As data grows at explosive rates, the need for scalable tools to support fast implementations of complex network analytical functions such as community detection is critical. Given a graph, the problem of community detection is to compute a partitioning of vertices into communities that are closely related within and weakly across communities. Modularity is a metric that can be used to measure the quality of communities detected \cite{newman_finding_2004}.  Modularity maximization is an NP-Complete problem \cite{brandes_modularity_2008} and therefore fast approximation heuristics are used in practice. One such heuristic is the Louvain method~\cite{blondel_fast_2008}.
 
Our basis for selecting the Louvain heuristic for parallelization hinges on its increasing popularity within the user community and owing to its strengths in algorithmic and qualitative robustness. With well over 1,700 citations to the original paper (as of this writing), the user base for this method has been rapidly expanding in the last few years. 
As network sizes continue to grow rapidly into scales of tens or even hundreds of billions of edges \cite{dimacs10_10th_????}, the memory and runtime limits of the serial implementation are likely to be tested. However, parallelization of this inherently serial algorithm can be challenging (as discussed in Section~\ref{secChallenges}). 

The parallel solutions presented in this paper (Section~\ref{secMethods}) provide a way to overcome key scalability challenges. In devising our algorithm, we factored in the need to parallelize without compromising the quality of the original serial heuristic and yet be capable of achieving substantial speedup. Where possible, we also factored in the need for guaranteeing stability in output across different platforms and programming models. The resulting algorithm, presented in Section~\ref{secParallelAlgo}, is a combination of heuristics that can be implemented on both shared and distributed memory machines.
As demonstrated in our experimental section (Section~\ref{secExp}), our multi-threaded implementations output results that have either a higher or comparable modularity to that of the serial method, and is able to reduce the time to solution by factors of up to $16\times$. These observations are supported over a number of real-world networks. 

{\bf Contributions:} The main contributions of this paper are:
\begin{enumerate}[i)]
\item
 Introduction of novel and effective heuristics for parallelization of the Louvain algorithm on multithreaded architectures; 
\item
 Experimental studies using 11 real-world networks obtained from varied sources including the DIMACS10 challenge website, University of Florida sparse matrix collection and biological databases; and
\item 
A thorough comparative study of the performance and related trade-offs among the different parallel heuristics along with the serial Louvain method. 
\end{enumerate}

\vspace*{-0.1in}
\section{Problem statement and notation}
\label{secProblem}

Let $G(V,E,\omega)$ be an undirected weighted graph, where $V$ is the set of vertices, $E$ is the set of edges and $\omega(.)$ is a weighting function that maps every edge in $E$ to a non-zero, positive weight\footnote{If the graph is unweighted, then we treat every edge to be of weight 1.}. 
In the input graph,  edges that connect a vertex to itself are allowed --- i.e., $(i,i)$ can be a valid edge. However, multi-edges are not allowed. 
Let the adjacency list of $i$ be denoted by $\Gamma(i) = \{j | (i,j)\in E\}$. 
Let $k_i$ denote the weighted degree of vertex $i$ --- i.e., 
$k_i = \sum_{\forall j\in\Gamma(i)}\omega(i,j)$.
We will use $n$ to denote the number of vertices in $G$; $M$ to denote the \emph{number} of edges in the graph; and $m$ to denote the sum of all edge weights  --- i.e., $m=\frac{1}{2}\sum_{\forall i\in V}k_i$.  

A \emph{community} within graph $G$ represents a (possibly empty\footnote{The notion of empty communities does not have practical relevance. We have intentionally defined it this way so as to make our later algorithmic descriptions easier. It is guaranteed, however, that all output communities at the end of our algorithm will be non-empty subsets.}) subset of  $V$. In practice, for community detection, we are interested in partitioning the vertex set $V$ into an arbitrary number of \emph{disjoint} non-empty communities, each with an arbitrary size ($>0$ and $\leq n$).
We call a community with just one element as a \emph{singlet} community. 
We will use $C(i)$ to denote the community that contains vertex $i$ in a given partitioning of $V$.  
We use the term \emph{intra-community edge} to refer to an edge that connects two vertices of the same community. All other edges are referred to as \emph{inter-community edges}.
Let $E_{i\rightarrow C}$ refer to the set of all edges connecting vertex $i$ to vertices in community $C$. 
And let $e_{i\rightarrow C}$ denote the sum of the edge weights for the edges in $E_{i\rightarrow C}$. 
\begin{eqnarray}
  \label{eqnEIC}
  e_{i\rightarrow C} &=& \sum_{\forall (i,j)\in E_{i\rightarrow C}} \omega(i,j)
\end{eqnarray}
Let $a_C$ denote the sum of the degrees of all the vertices in community $C$  (also referred to as \emph{community degree}).  
\begin{eqnarray}
  \label{eqnAC}
a_c &=& \sum_{\forall i\in C}k_i  
\end{eqnarray}


{\bf Modularity:} Let $P=\{C_1,C_2,\ldots C_k\}$ denote the set of all communities in a given partitioning of the vertex set $V$ in $G(V,E,\omega)$, where $1\leq k\leq n$. Consequently, the \emph{modularity} (denoted by $Q$) of the partitioning $P$ is given by the following expression \cite{newman_finding_2004}:
\begin{eqnarray}
  \label{eqnQ}
  Q \; = \; \frac{1}{2m} \sum_{\forall i\in V}e_{i\rightarrow C(i)} \; - \;  \sum_{\forall C\in P} (\frac{a_C}{2m}\cdot \frac{a_C}{2m})
\end{eqnarray}




Modularity is not an ideal metric for community detection and issues such as resolution limit have been identified \cite{fortunato_community_2010,traag_narrow_2011}; a few variants of modularity definitions have also been devised \cite{traag_narrow_2011,bader_modularity_2009,berry_tolerating_2011}. 
However, the definition provided in Eqn.~(\ref{eqnQ}) continues to be the more widely adopted version in practice, including in the Louvain method \cite{blondel_fast_2008}, and therefore, we will use that definition for this paper. 

{\bf Community detection:} Given $G(V,E,\omega)$, the problem of community detection is to compute a partitioning $P$ of communities that maximizes modularity.

This problem has been shown to be NP-Complete \cite{brandes_modularity_2008}. 
Note that this problem is different from graph partitioning problem and its variants \cite{hendrickson_graph_2000}, where the number of clusters and the rough size distribution of those target clusters are known {\it a priori}. In the case of community detection, both quantities are unknown prior to computation. In fact they encapsulate the input properties that one seeks  to discover out of the community detection exercise.

\section{The Louvain algorithm}

In 2008, Blondel {\it et al.} presented an algorithm for the community detection \cite{blondel_fast_2008}. The method, called the \emph{Louvain} method, is an iterative, greedy heuristic capable of producing a hierarchy of communities. The main idea of the algorithm is rather simple and can be summarized as follows: Starting with each vertex in a separate community, the algorithm progresses from one iteration to another until the modularity \emph{gain} becomes negligible (as defined by a predefined threshold). Within each \emph{iteration}, the algorithm linearly scans the vertices in an arbitrary but predefined order. For every vertex $i$, all its neighboring communities (i.e., the communities containing $i's$ neighbors) are examined and the modularity gain that would result if $i$ were to move to each of those neighboring communities from its current community is calculated.  Once the gains are calculated, the algorithm assigns a neighboring community that would yield the maximum modularity gain, as the new community for $i$ (i.e., new $C(i)$), and updates the corresponding data structures that it maintains for the source and target communities. Alternatively, if all gains turn out to be negative, the vertex stays in its current community. An iteration ends once all vertices are linearly scanned in this fashion. After a phase terminates, the algorithm proceeds to the next \emph{phase} by collapsing all vertices of a community to a single ``meta-vertex''; placing an edge from that meta-vertex to itself with an edge weight that is the sum of weights of all the intra-community edges within that community; and placing an edge between two meta-vertices with a weight that is equal to the sum of the weights of all the inter-community edges between the corresponding two communities. The result is a condensed graph $G^\prime(V^\prime,E^\prime,\omega^\prime)$, which then becomes the input to the next phase. Subsequently, multiple phases are carried out until the modularity score converges. Note that each phase represents a coarser level of hierarchy in the community detection process.


At any given iteration, let $\Delta Q_{i\rightarrow C(j)}$ denote the modularity gain that would result from moving a vertex $i$ from its current community $C(i)$ to a different community $C(j)$.  This term is given by:
\begin{eqnarray}
  \label{eqnDeltaQ}
  \Delta Q_{i\rightarrow C(j)}  &=&   \frac{e_{i\rightarrow C(j)} - e_{i\rightarrow C(i)\setminus \{i\}}}{m} \nonumber \\
   && + \frac{2\cdot k_i\cdot a_{C(i)\setminus \{i\}} - 2\cdot k_i\cdot a_{C(j)} }{(2m)^2}
\end{eqnarray}
Consequently, the new community assignment for $i$ at an iteration is determined as follows:
\begin{eqnarray}
  \label{eqnCommTarget}
  C(i) &=& \arg\max_{C(j)} \Delta Q_{i\rightarrow C(j)}, \forall j\in \Gamma(i)\cup \{i\} 
\end{eqnarray}
In the implementation \cite{louvain_findcommunities_????}, several data structures are maintained such that each instance of $\Delta Q_{i\rightarrow C(j)}$ can be computed in $O(1)$ time. Consequently, the algorithm's time complexity \emph{per} iteration is $O(M)$. While no upper bound has been established on the number of iterations or on the number of phases, it should be evident that the algorithm is guaranteed to terminate with the use of a cutoff for the modularity gain (because of the modularity being a monotonically increasing function until termination). In practice, the method needs only tens of iterations and fewer phases to terminate on most real world inputs.  


\vspace*{-0.1in}
\section{Challenges in parallelization}
\label{secChallenges}

Any attempt at parallelizing the Louvain method should factor in the sequential nature in which the vertices are visited within each iteration and the impact it has on convergence. Visiting the vertices sequentially gives the advantage of working with the latest information available from all the preceding vertices in this greedy procedure. 
Furthermore, in the serial algorithm, when a vertex computes its new community assignment (using Eqn.(\ref{eqnCommTarget})), it does so with the guarantee that no other part of the community structure is concurrently being altered. These guarantees may \emph{not} hold in \emph{parallel}. In other words, if communities are updated in parallel, it could lead to some interesting situations with an impact on the convergence process as described below.


\vspace*{-0.1in}
\subsection{Negative gain scenario}
\label{secNegativeGain}

\begin{figure}[!t]
\centering
\includegraphics[width=0.65\textwidth]{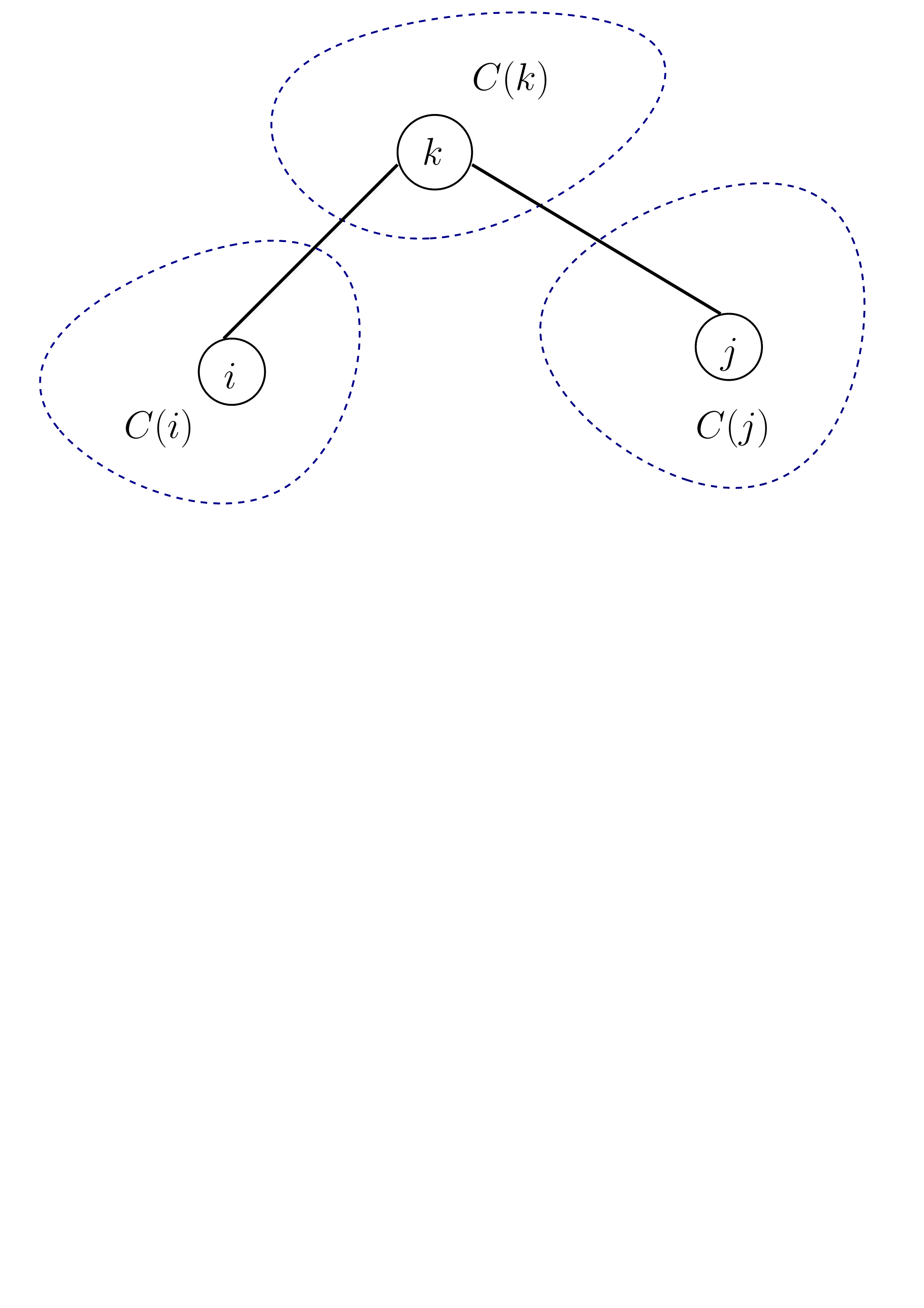}
\caption{Illustration of the negative gain scenario using an example of three vertices (Lemma~\ref{lemma1}).}
\label{figLemma1}
\end{figure}

To illustrate the case in point, consider the example scenario illustrated in Figure~\ref{figLemma1}, where two vertices $i$ and $j$ are both connected to a third vertex $k$ with all three of them in different communities initially --- i.e., $i\in C(i)$, $j\in C(j)$, $k\in C(k)$ s.t. $C(i)\ne C(j) \ne C(k)$.  If both vertices $i$ and $j$ evaluate the possibility of moving to $C(k)$ independently, using Eqn.(\ref{eqnDeltaQ}), then from each of their perspectives, their \emph{predicted} value for the new modularity  is $Q_{old}+\Delta Q_{i\rightarrow C(k)}$ and $Q_{old}+\Delta Q_{j\rightarrow C(k)}$, respectively. However, if both $i$ and $j$ decide to move to $C(k)$ in parallel, then the \emph{actual} value for the new modularity will be $Q_{old}+\Delta Q_{\{i,j\}\rightarrow C(k)}$, where:
\begin{eqnarray}
  \label{eqnLemma1a}
  \Delta Q_{\{i,j\}\rightarrow C(k)} &=& \Delta Q_{i\rightarrow C(k)} + \Delta Q_{j\rightarrow C(k)} \nonumber \\
      &&  + \frac{\omega(i,j)}{m} - \frac{ 2\cdot k_i \cdot k_j }{(2m)^2}
\end{eqnarray}
If $(i,j)\notin E$, $\omega(i,j)=0$, implying:
\begin{eqnarray}\label{eqnLemma1b}
  \Delta Q_{\{i,j\}\rightarrow C(k)} &=& \Delta Q_{i\rightarrow C(k)} + \Delta Q_{j\rightarrow C(k)} \nonumber \\
   && - \frac{ 2\cdot k_i \cdot k_j }{(2m)^2} \nonumber \\
   &\leq &  \Delta Q_{i\rightarrow C(k)} + \Delta Q_{j\rightarrow C(k)}
\end{eqnarray}
Furthermore, if $\Delta Q_{i\rightarrow C(k)} + \Delta Q_{j\rightarrow C(k)} <\frac{ 2\cdot k_i \cdot k_j }{(2m)^2} $
\begin{eqnarray}\label{eqnLemma1c}
  &\Rightarrow &  \Delta Q_{\{i,j\}\rightarrow C(k)} < 0
\end{eqnarray}
On the other hand, if $\frac{\omega(i,j)}{m}>\frac{2\cdot k_i\cdot k_j}{(2m)^2}$ (can be true only if $(i,j)\in E$), then: 
\begin{eqnarray}\label{eqnLemma1d}
  \Delta Q_{\{i,j\}\rightarrow C(k)} &>& \Delta Q_{i\rightarrow C(k)} + \Delta Q_{j\rightarrow C(k)}  
\end{eqnarray}
This is because $\Delta Q_{i\rightarrow C(k)}>0$ and $\Delta Q_{j\rightarrow C(k)}>0$; the latter two inequalities follow from the fact that $i$ and $j$ chose to move to $C(k)$. Note that if this happens, then parallel version could potentially surpass the serial version toward modularity convergence.


\begin{lemma}
  \label{lemma1}
  At any given iteration of the Louvain algorithm, if community updates for vertices are performed in parallel, then the net modularity gain achieved cannot be guaranteed to be always positive. 
  \end{lemma}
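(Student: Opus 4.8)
The plan is to refute the would-be guarantee by exhibiting a single concrete instance, together with one legitimate round of simultaneous moves, for which the net modularity change is strictly negative; since the statement only claims that positivity \emph{cannot be guaranteed}, one such counterexample suffices. The vehicle is precisely the three-vertex gadget of Figure~\ref{figLemma1}: vertices $i$, $j$, $k$ sitting in three distinct communities, with $(i,k),(j,k)\in E$ but $(i,j)\notin E$, placed inside a slightly larger weighted graph that merely supplies the remaining total weight $m$.

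First I would specialise the gain formula~(\ref{eqnDeltaQ}) to this gadget. Taking $C(i)$, $C(j)$, $C(k)$ to be singletons makes $e_{i\rightarrow C(i)\setminus\{i\}}=0$ and $a_{C(i)\setminus\{i\}}=0$, so $\Delta Q_{i\rightarrow C(k)}=\frac{\omega(i,k)}{m}-\frac{2k_ik_k}{(2m)^2}$ and symmetrically for $j$; and since $k$ is the only non-self neighbour of $i$ (resp.\ $j$), the $\arg\max$ in~(\ref{eqnCommTarget}) reduces to ``move iff this gain is positive,'' which holds exactly when $\omega(i,k)>\frac{k_ik_k}{2m}$ and $\omega(j,k)>\frac{k_jk_k}{2m}$. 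Under those conditions $i$ and $j$ independently, and legitimately, both select $C(k)$. Next I would invoke~(\ref{eqnLemma1a}) with $\omega(i,j)=0$, i.e.\ the chain already displayed in~(\ref{eqnLemma1b})--(\ref{eqnLemma1c}): the true effect of the simultaneous move is $\Delta Q_{\{i,j\}\rightarrow C(k)}=\Delta Q_{i\rightarrow C(k)}+\Delta Q_{j\rightarrow C(k)}-\frac{2k_ik_j}{(2m)^2}$, which is negative as soon as $\Delta Q_{i\rightarrow C(k)}+\Delta Q_{j\rightarrow C(k)}<\frac{2k_ik_j}{(2m)^2}$. It then remains to check that the three inequalities on $\omega(i,k)$ and $\omega(j,k)$ are simultaneously met by an honest weighted graph: the first two pin $\omega(i,k)+\omega(j,k)$ above $\frac{k_k(k_i+k_j)}{2m}$, while the third asks it to stay below $\frac{k_k(k_i+k_j)+k_ik_j}{2m}$, an interval of positive width $\frac{k_ik_j}{2m}$; so one only has to load most of $i$'s and $j$'s weighted degree onto self-loops (or onto a heavy neighbour) so that the gadget gains are tiny, and add a couple of auxiliary vertices/edges to fix $m$. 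A small explicit choice (for instance $\omega(i,k)=\omega(j,k)=1$, large self-loops at $i$ and $j$, plus one self-loop vertex for mass) then yields $\Delta Q_{\{i,j\}\rightarrow C(k)}<0$ with both individual gains still positive, which is the asserted non-positive net gain.

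The step I expect to be the real obstacle is not any of the algebra above but making the counterexample \emph{faithful to what the algorithm actually does}: in a parallel iteration \emph{every} vertex recomputes its target, so I must ensure no third vertex moves, since extra moves would change the realised net gain. In particular, when all three communities are singletons the symmetry of~(\ref{eqnDeltaQ}) forces $\Delta Q_{k\rightarrow C(i)}=\Delta Q_{i\rightarrow C(k)}>0$, so $k$ itself would also want to move. I would resolve this either (a) by breaking the symmetry --- placing $i$ in a two-vertex community with a loosely attached, high-degree partner (and likewise $j$), which keeps $i$'s move-out gain positive while driving $k$'s move-in gain negative, leaving $i$ and $j$ as the only movers so that~(\ref{eqnLemma1a}) applies verbatim; or (b) by keeping the minimal graph and simply evaluating $Q$ directly from~(\ref{eqnQ}) before and after the (in that case three-way) update, which comes out negative as well. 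Everything else is routine substitution into~(\ref{eqnDeltaQ}) and~(\ref{eqnLemma1a}), both already established in the text.
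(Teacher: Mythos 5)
Your proposal is correct and takes essentially the same route as the paper: the paper's one-line proof simply invokes the three-vertex gadget and the chain (\ref{eqnLemma1a})--(\ref{eqnLemma1c}), which is exactly the mechanism you build on. Your extra steps --- exhibiting an explicit weighted instance in the feasible interval and noting that $k$ itself would also move in a fully parallel sweep (so one must either break the symmetry or evaluate $Q$ directly) --- only make rigorous what the paper leaves implicit.
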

\begin{proof}
Follows directly from inequality~(\ref{eqnLemma1b}).
\end{proof}

The above lemma has a direct implication on the convergence property of the Louvain method, one way or another. Pessimistically speaking, if the net modularity gain can become negative between consecutive iterations of the algorithm, then there is no theoretical guarantee that the algorithm will terminate. Even if the chances of non-termination turn out to be bleak, it could potentially slow down the rate at which the algorithm progresses toward a solution, causing more number of iterations.  For this reason, the \emph{number of iterations} that the algorithm takes to converge toward the solution and the \emph{quality of the solution} relative to the serial algorithm's can be good indicators of the effectiveness of a parallel strategy. Note that the above example with three vertices can be extended to scenarios where multiple unrelated vertices are trying to enter a community at its periphery without mutual knowledge. 

\vspace*{-0.1in}

\subsection{Swap and local maxima scenarios}
\label{secSwap} 

There exists another scenario that could impede the progression of the parallel algorithm toward a solution. Consider a simple example where two vertices $i$ and $j$ connected by an edge $(i,j)\in E$ s.t., $C(i)=\{i\}$ and $C(j)=\{j\}$. In the interest of increasing modularity, if the two vertices make a decision to move to each other's community concurrently, then such an update could potentially result in both vertices simply swapping their community assignments without achieving any modularity gain. This could also happen in a more generalized setting, where subsets of vertices between two different communities swap their community assignments, each unaware of the other's intent to also migrate. 

A parallel algorithm also runs the risk of settling on locally optimal decisions. This could happen even in serial;  in parallel such scenarios may arise if a single community gets partitioned into equally weighted sub-communities, in which there is no incentive for any individual vertex to merge with any of the other sub-communities; and yet, if all vertices from each of the sub-communities were to merge together to form a single community the net modularity gain could be positive. An example of this case will be shown later in Section~\ref{secMinDegree}. Getting stuck in a locally optimal solution, however, can be resolved when the algorithm progresses to subsequent phases. 

\vspace*{-0.1in}
\section{Parallel heuristics}
\label{secMethods}

In this section, we present our ideas to tackle the challenges outlined above in parallelizing the Louvain heuristic community detection.

\vspace*{-0.1in}
\subsection{The minimum label heuristic}
\label{secMinDegree}

\begin{figure*}[!t]
\centering
\subfigure{\includegraphics[width=0.31\textwidth]{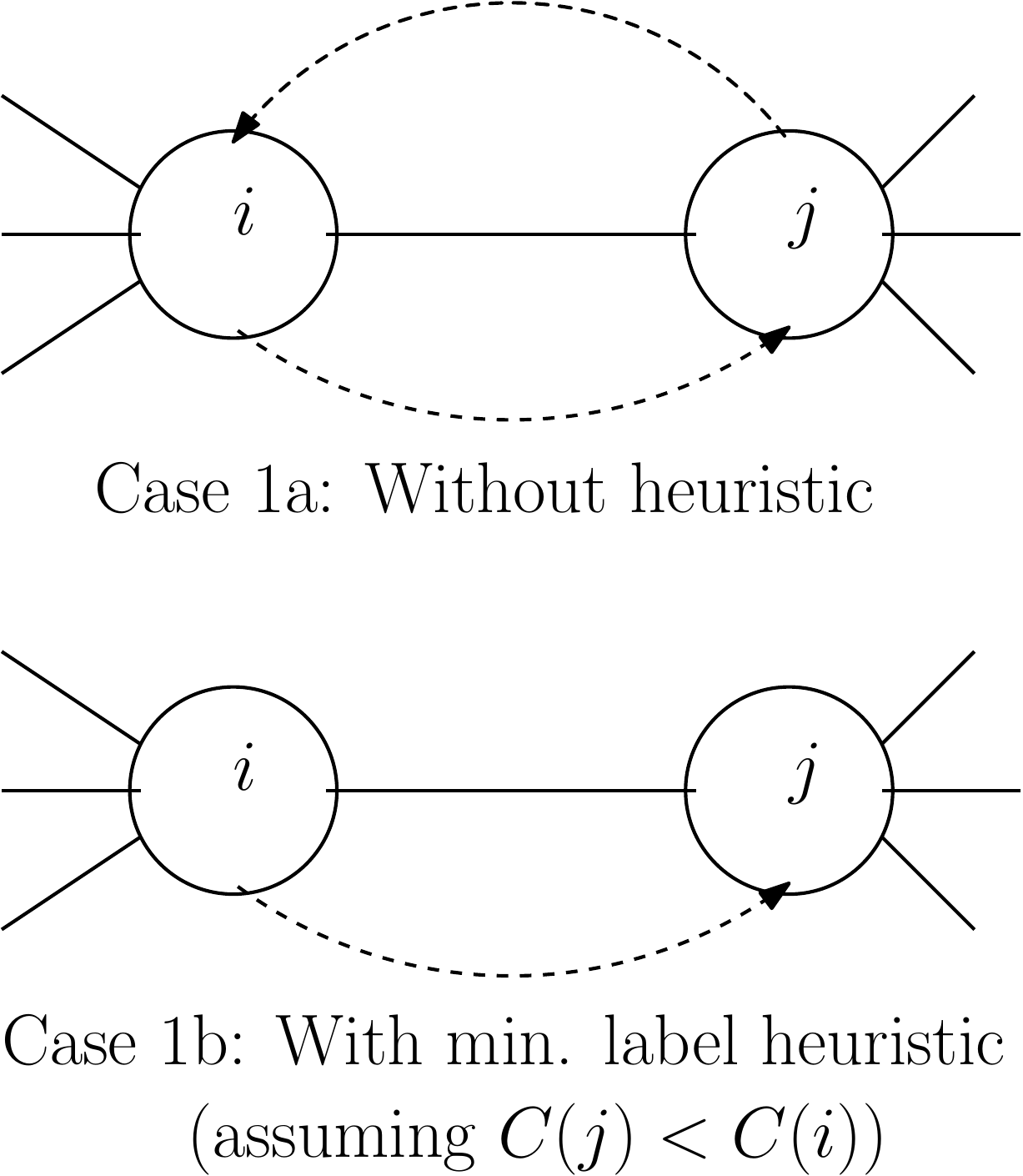} }
\subfigure{\includegraphics[width=0.31\textwidth]{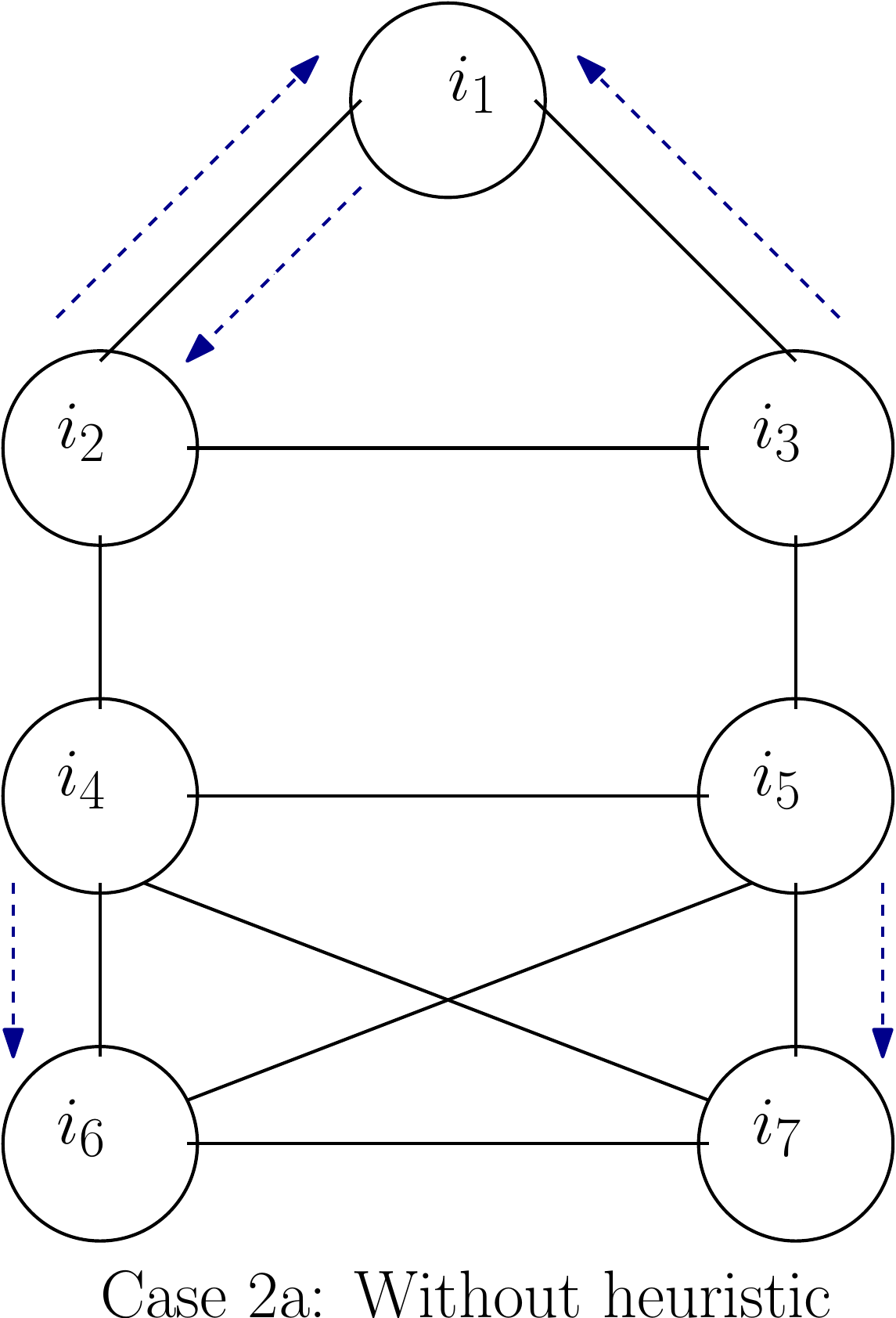} }
\subfigure{\includegraphics[width=0.31\textwidth]{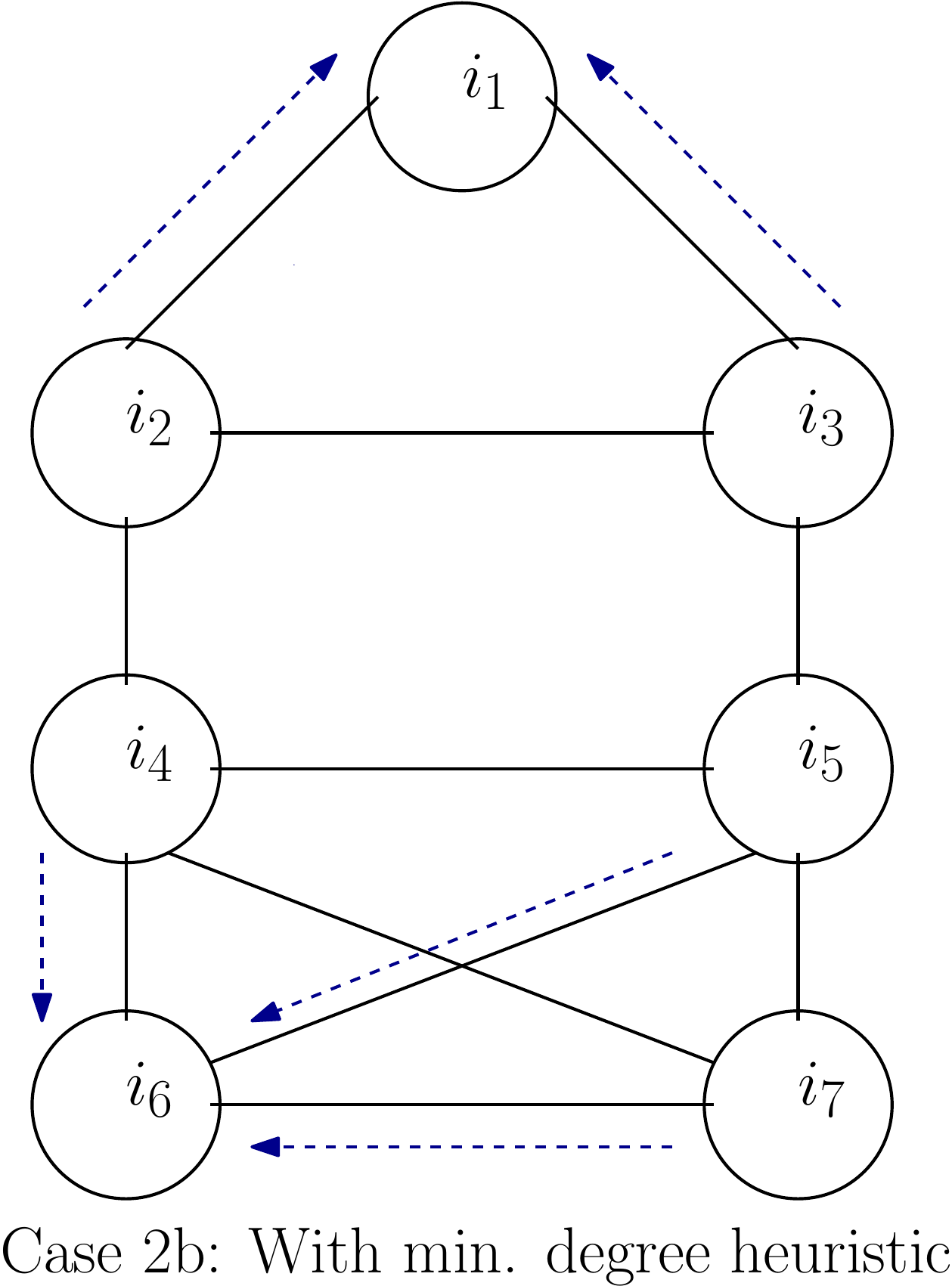} }
\caption{Examples of cases which can be handled by using the minimum labeling heuristic. The dotted arrows point to the direction of the vertex migration. 
Case 1 shows a scenario of vertex swap between two communities. 
Case 2) shows the evolution of two different communities $\{i_1,i_2,i_3\}$ and $\{i_4,i_5,i_6,i_7\}$. Without the application of any heuristic (Case 2b), the algorithm may either form partial communities (e.g., $\{i_1\},\{i_2,i_3\}$) or may settle on a local maxima (e.g., $\{i_4,i_6\},\{i_5,i_7\}$). Whereas the use of a minimum label heuristic could help the communities converge to the final solutions faster (as shown in Case 2b). 
}
\label{figMinLabel}
\end{figure*}

Section~\ref{secSwap} elaborated on the possibilities of swapping conditions that may delay the parallel algorithm's convergence to a solution. In this section we present a heuristic designed to address some of these cases.
Let us consider the simple case of two vertices $i$ and $j$ outlined in Section~\ref{secSwap}. Here both vertices are initially in communities of size one, and a decision in favor of merging at any given iteration will lead them to simply swap their respective communities without resulting in any net modularity gain. This is outlined in the Case 1a of Figure~\ref{figMinLabel}. Such a swap can be easily prevented by introducing a labeling scheme where it can be enforced that only one of them move to other's community. 
More specifically, let the communities at any given stage of the algorithm be labeled numerically (in an arbitrary order).  We will use the notation $\ell(C)$ to denote the label of a community $C$. Then the heuristic is as follows: 

{\bf The singlet minimum label heuristic: } In the parallel algorithm, at any given iteration, if a vertex $i$ which is in a community by itself (i.e., $C(i)=\{i\}$), decides (in the interest of modularity gain) to move to another community $C(j)$ which also contains only one vertex $j$, then that move will be performed \emph{only if} $\ell(C(j))<\ell(C(i))$. 



The above heuristic can be generalized to other cases of swapping or local maxima. For instance, let us consider the 4-clique of $\{i_4,i_5,i_6,i_7\}$ shown in Figure~\ref{figMinLabel}: case 2, assuming that each vertex is in its own individual community to start with. Here, in the absence of an appropriate heuristic there is a chance that the algorithm would settle on a local maxima. For instance, maximum modularity gains can be achieved at vertex $i_4$ by either moving to $C(i_6)$ or $C(i_7)$, and similarly for vertex $i_5$. However, if $i_4$ moves to $C(i_6)$ and $i_5$ to $C(i_7)$, then the resulting solution $\{i_4,i_6\},\{i_5,i_7\}$  (shown in case 2a of Figure~\ref{figMinLabel}) will represent a local maxima from which the algorithm may not proceed in the current phase. This is because, once these partial communities form, there is no incentive for $i_4$ or $i_6$ to individually move to the community containing  $\{i_5,i_7\}$, without each other's company. This is a limitation imposed by the Louvain heuristic, which makes decisions at the vertex level. However, if we label and treat the communities in a certain way then such local maxima situations can be avoided. 

{\bf The generalized minimum label heuristic: } In the parallel algorithm, at any given iteration, if a vertex $i$ has \emph{multiple} neighboring communities yielding the maximum modularity gain, then the community which has the minimum label among them will be selected as $i'$s destination community.

In the example for Figure~\ref{figMinLabel}:case 2, vertices $i_6$ and $i_7$ will both yield the maximum modularity gain for vertices $i_4$ and $i_5$. However, using the above minimum label heuristic, all three vertices $\{i_4,i_5,i_7\}$ will migrate to $C(i_6)$, while $i_6$ stays in $C(i_6)$ --- i.e., assuming $\ell(C(i_4))<\ell(C(i_5))<\ell(C(i_6))<\ell(C(i_7))$.


While swap situations may delay convergence, they can never lead to nontermination of the algorithm due to the use of a minimum required net modularity gain threshold to continue a phase.  
As for local maxima, a general proof that effects of elimination of local maxima cases progressively as the algorithm progresses is not possible due to the heuristic nature of algorithm. However, many situations, similar to those explained earlier in Section~\ref{secSwap}, typically get resolved in subsequent phases; this is because the representation of the individual sub-communities as meta-vertices is likely to lead them to merge with one another forming the containing communities eventually in the output.

\vspace*{-0.1in}
\subsection{Coloring}
\label{secColoring}

In this section, we explore the idea of graph coloring to address some of the parallelization challenges outlined in Section~\ref{secChallenges}. A distance-$k$ coloring of a graph is an assignment of colors to vertices such that no two vertices separated by a distance of at most $k$ are assigned the same color.
It should be easy to see that using distance-1 coloring to partition the vertices into color sets prior to the processing would prevent vertex-to-vertex swap scenarios. In this scheme, vertices of the same color are processed in parallel, and this is equivalent of guaranteeing that no two adjacent vertices will be processed concurrently. However, distance-1 coloring may not be adequate to address  other potential complications that may arise during parallelization (see Section~\ref{secNegativeGain}). 
\begin{corollary}
  \label{corollary2}
    Applying and processing the vertices in parallel by distance-1 coloring does \emph{not} necessarily preclude the possibility of negative modularity gains between iterations. 
\end{corollary}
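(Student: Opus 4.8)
The plan is to exhibit a small counterexample graph in which a distance-1 coloring is in force, two non-adjacent vertices share a color (hence are processed in the same parallel round), and their simultaneous migration into a common neighboring community produces a strictly negative net modularity change. The key observation is that distance-1 coloring only forbids \emph{adjacent} vertices from being processed together; it says nothing about two vertices $i$ and $j$ that are at distance exactly $2$ through a common neighbor $k$. But that is precisely the configuration analyzed in Section~\ref{secNegativeGain}: with $(i,j)\notin E$ we have $\omega(i,j)=0$, and inequality~(\ref{eqnLemma1b}) gives
\begin{equation*}
  \Delta Q_{\{i,j\}\rightarrow C(k)} \;=\; \Delta Q_{i\rightarrow C(k)} + \Delta Q_{j\rightarrow C(k)} - \frac{2\,k_i\,k_j}{(2m)^2},
\end{equation*}
which by~(\ref{eqnLemma1c}) is strictly negative whenever $\Delta Q_{i\rightarrow C(k)}+\Delta Q_{j\rightarrow C(k)} < \frac{2 k_i k_j}{(2m)^2}$.

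First I would reuse the three-vertex gadget of Figure~\ref{figLemma1}: vertices $i$, $j$, $k$ with edges $(i,k)$ and $(j,k)$ present, $(i,j)$ absent, and all three initially singletons. Second, I would verify that in \emph{any} valid distance-1 coloring of this gadget, $i$ and $j$ necessarily receive the same color: both are adjacent to $k$, so each must differ in color from $k$, but $i$ and $j$ are non-adjacent and hence are free to—and with the natural greedy coloring will—share a color. Thus the coloring-based parallel schedule will in fact process $i$ and $j$ in the same round, both evaluating a move to $C(k)$. Third, I would check that each of $i$ and $j$ individually has positive predicted gain $\Delta Q_{i\rightarrow C(k)}>0$ and $\Delta Q_{j\rightarrow C(k)}>0$ (so the serial algorithm would indeed move them), yet the combined move is negative—this is just instantiating the inequality above, and one can make it concrete by, e.g., embedding the gadget in a slightly larger graph so that $2m$ is fixed and the numeric values of $k_i$, $k_j$ fall in the required range. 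Finally, I would conclude that the net modularity gain over that iteration is negative, establishing the corollary.

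I do not expect any serious obstacle; the corollary is essentially a remark that Lemma~\ref{lemma1}'s mechanism survives under distance-1 coloring, since the offending pair is non-adjacent by construction. The one point requiring a little care is the second step—arguing that a distance-1 coloring genuinely \emph{permits} (indeed, for this gadget, \emph{forces} under any proper $2$-coloring, and certainly allows in general) $i$ and $j$ to land in the same color class—because the statement of the corollary is about what distance-1 coloring "does not preclude," so it suffices to show the bad event is consistent with some valid coloring rather than forced by every one. Making that logical scope explicit, and then pointing to~(\ref{eqnLemma1c}), is all that is needed; no routine calculation beyond re-citing the already-derived inequalities is required.
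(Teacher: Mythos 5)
Your proposal is correct and follows essentially the same route as the paper: the paper's proof is simply the observation that the three-vertex gadget of Lemma~\ref{lemma1} has $i$ and $j$ non-adjacent, so a distance-1 coloring permits them to share a color and be processed concurrently, whence inequality~(\ref{eqnLemma1c}) still yields a negative net gain. Your added care about logical scope (the bad event need only be \emph{consistent with} some valid coloring, not forced by every one) is a sound refinement of the same argument rather than a different approach.
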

\begin{proof}
  Follows directly from the three vertex example case presented for Lemma~\ref{lemma1}. 
\end{proof}
In fact the same result can be extended for application of a distance-k coloring scheme, where $k>1$, as was shown in \cite{lu_parallel_2014}.

Despite these lack of guarantees for a positive modularity gain between iterations, coloring still could be effective as a heuristic in practice, as we will demonstrate in Section~\ref{secExp}. 
The performance trade-off presented by coloring is a potential reduction in the degree of parallelism versus faster convergence to higher modularity.
Coloring also presents an added advantage of being able to use higher modularity gain thresholds during the earlier phases of the algorithm, as will be explored in Section~\ref{secExp}.
The run-time cost of coloring is expected to be dominated by the time spent within iterations; furthermore, for scalability in preprocessing, we use a parallel implementation to perform coloring \cite{catalyurek_graph_2012}.


\vspace*{-0.1in}
\subsection{The vertex following heuristic}
\label{secMerging}

In this section, we will layout a particular property of the \emph{serial} Louvain algorithm in the way it treats vertices with single neighbors, and devise a heuristic around it. For the purpose of the lemma below, we will assume a version of Louvain algorithm which continues with iterations within a phase, until the communities stop changing.
We also distinguish between vertex $i$ being a \emph{single degree} vertex and a \emph{single neighbor} vertex --- the former is when the only edge incident on $i$ is $(i,j)$, whereas the latter is when $i$ could have up to two edges incident with $(i,j)$ being mandatory and $(i,i)$ being optional.

\begin{lemma}
\label{lemma3}
Given an input graph $G(V,E,\omega)$, let $i$ and $j$ be two different vertices such that $i$ is a single degree vertex with only one incident edge $(i,j)\in E$. Then, in the final solution $C(i)=C(j)$ --- i.e., $i$ should be part of the same community as $j$. 
\end{lemma}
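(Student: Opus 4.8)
The plan is to analyze what the serial Louvain algorithm does with a single degree vertex $i$ whose only incident edge is $(i,j)$, and to argue that $i$ cannot end up separated from $j$ in the final (converged) solution. The key structural fact is that $i$ has exactly one neighbor, namely $j$, so at every iteration the only non-singlet community that $i$ can gain anything by joining is $C(j)$: for any community $C$ with $j \notin C$, we have $e_{i\rightarrow C} = 0$, so the only reason $i$ would move into such a $C$ is the degree (second) term in Eqn.~(\ref{eqnDeltaQ}), which is maximized (made least negative, i.e.\ zero) by taking $C$ to be empty. Hence among all candidate destinations not containing $j$, the best $i$ can do is to sit alone in a singlet community.

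First I would establish the ``attracting'' direction: whenever $i$ is currently a singlet and $j$ lies in some community $C(j)$, the gain $\Delta Q_{i\rightarrow C(j)}$ computed from Eqn.~(\ref{eqnDeltaQ}) is strictly positive, or at least non-negative and strictly better than staying put whenever $C(j)$ is itself a singlet containing only $j$ (this is essentially the computation already done for the two-vertex swap case, specialized to $e_{i\rightarrow C(j)} = \omega(i,j) = k_i$). Concretely, moving $i$ into $C(j)$ contributes $\frac{\omega(i,j)}{m}$ from the edge term against a loss of $\frac{2 k_i a_{C(j)}}{(2m)^2}$ from the degree term; since $k_i = \omega(i,j)$ this is positive as long as $a_{C(j)} < 2m$, which holds whenever $C(j)$ is a proper part of the graph. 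So the serial algorithm, scanning $i$, will always move $i$ into $C(j)$ (or keep it there), never into any other non-singlet community, and never strictly prefer a singlet over joining $j$.

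Next I would rule out the ``bad'' terminal configuration in which $i$ is a singlet and $j$ is elsewhere. Suppose the algorithm has converged (communities stop changing across a full iteration) and $C(i) = \{i\} \ne C(j)$. When $i$ is scanned in that final iteration, the computed best move is into $C(j)$ with strictly positive gain by the previous paragraph, so $i$ would move — contradicting convergence. One must also handle the configuration where $i \in C(j)$ but $C(j)$ also contains other vertices that might, in some later iteration, all leave: but if $i$ is ever ``left behind'' as a singlet, the same argument re-applies at that iteration, so this cannot be the terminal state either. Therefore the only stable configuration is $C(i) = C(j)$.

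The main obstacle I anticipate is bookkeeping around the tie-breaking and the ``$i$ moves but so does $j$'' interaction within a single serial iteration — e.g.\ $j$ itself migrating during the same scan — which is why the lemma's hypothesis fixes the variant of Louvain that iterates within a phase ``until the communities stop changing'': one should phrase the argument about the \emph{final} converged state rather than tracking the dynamics. A second subtlety is the degenerate global case where the graph consists solely of the edge $(i,j)$ (and possibly a self-loop), so that $a_{C(j)}$ could equal $2m$; here the gain is zero but the singlet-minimum-label / tie-breaking convention still forces $i$ and $j$ together, and in any case the claimed equality $C(i)=C(j)$ holds trivially since there are only two vertices. Once these edge cases are dispatched, the argument reduces to the single inequality $\Delta Q_{i\rightarrow C(j)} > 0$ at the converged state, which is the crux.
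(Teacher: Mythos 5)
Your proposal is correct and follows essentially the same route as the paper: you show that for a single degree vertex $i$ with $k_i=\omega(i,j)$, the gain $\Delta Q_{i\rightarrow C(j)}$ is strictly positive whenever $C(i)\neq C(j)$ because $a_{C(j)}\leq 2m-\omega(i,j)<2m$, so a converged state must have $C(i)=C(j)$. The only cosmetic difference is that the paper's single lower bound (using $a_{C(i)\setminus\{i\}}\geq 0$) handles in one stroke the case where $i$'s current community is not a singlet, which your write-up treats only implicitly, and your worry about $a_{C(j)}=2m$ never actually arises since $i\notin C(j)$ forces the strict inequality.
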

\begin{proof}
Consider any iteration $r$ in which vertices $i$ and $j$ are in two different communities --- i.e., $C(i)\neq C(j)$. 
During iteration $r$, the value of $\Delta Q_{i\rightarrow C(j)}$ will evaluate to the following:
\begin{eqnarray}
  \label{eqnClaimIItemp}
  \Delta Q_{i\rightarrow C(j)} &=& \frac{\omega(i,j)}{m} + \frac{ 2\cdot k_i\cdot a_{C(i)\setminus \{i\}}-2\cdot k_i\cdot a_{C(j)}  }{(2m)^2} \nonumber \\
   &\geq& \frac{\omega(i,j)}{m}  - \frac{2\cdot k_i\cdot a_{C(j)}}{(2m)^2}   \; (\because a_{C(i)\setminus \{i\}}\geq 0) \nonumber \\
     &=&  \frac{\omega(i,j)}{2m^2} \left( 2m  - \frac{k_i\cdot a_{C(j)}}{\omega(i,j)}  \right)  
\end{eqnarray}

Since vertex $i$ is a single degree vertex, $k_i=\omega(i,j)$. Therefore,
\begin{eqnarray}
  \label{eqnClaimII}
  \Delta Q_{i\rightarrow C(j)} &\geq& \frac{\omega(i,j)}{2m^2} \left( 2m  - a_{C(j)}  \right) 
\end{eqnarray}

Now, if $i$ were to decide \emph{against} moving to $C(j)$, $\Delta Q_{i\rightarrow C(j)}\leq 0$. 
Given that the above inequality (\ref{eqnClaimII}) is a lower bound for $\Delta Q_{i\rightarrow C(j)}$, and also that all edge weights are non-negative:
\begin{eqnarray} 
\label{eqnClaimIIb}
   \Rightarrow  2m  - a_{C(j)} &\leq  & 0 \nonumber\\
	 \Rightarrow 2m &\leq & a_{C(j)} 
\end{eqnarray}
But inequality~(\ref{eqnClaimIIb}) is \emph{not} possible because $a_{C(j)}\leq 2m$  for any community (by the definition in Eqn.\ref{eqnAC}) and in this case, since $i\notin C(j)$, $a_{C(j)}\leq (2m-\omega(i,j)) < 2m$. This implies that $i$ will have no choice but to move to $C(j)$ in iteration $r$. 
\end{proof}

We refer to the guarantee provided by the above lemma as the \emph{vertex following (VF) rule}. Note that it is guaranteed to hold only for single degree vertices in the input graph. 
The implication of this rule is that there is no need to explicitly make decisions on single degree vertices during the Louvain algorithm's iterations. Instead, we can preprocess the input such that all single degree vertices are merged {\it a priori} into their respective neighboring vertex.  More specifically, let $i$ be a single degree vertex with $j$ as its neighbor. Then, we remove vertex $i$ from the graph, and replace $j$  with a new vertex $j^\prime$, such that $\Gamma(j^\prime)=\{\Gamma(j)\setminus \{i\}\} \cup \{j^\prime\}$ and $\omega(j^\prime,j^\prime)=\omega(i,j)$ if $(j,j)\notin E$; and  $\omega(j^\prime,j^\prime)=\omega(j,j)+\omega(i,j)$ otherwise.  

This preprocessing not only could help reduce the number of vertices that need to be considered during each iteration, but it also allows the vertices that contain multiple neighbors (that tend to be the hubs in the networks) be the main drivers of community migration decisions. This is more important  under a parallel setting because if the single degree vertices were retained in the network the hub nodes could potentially gravitate temporarily toward one of their single degree mates, thereby delaying progression of solution or getting stuck in a local maxima. 

We could also extend the result of the Lemma~\ref{lemma3} to benefit cases where vertex $i$ is a single \emph{neighbor} vertex. The idea is similar to that of a k-core decomposition of the graph \cite{batagelj_generalized_2002}. Intuitively, during preprocessing, single neighbor vertices can be collapsed into their only neighboring vertex recursively until the negative component of the inequality~(\ref{eqnClaimIItemp}) starts to dominate its positive counterpart. Termination of this recursive merging can be implemented either by explicitly calculating both sides of the inequality~(\ref{eqnClaimIItemp}) or by estimating through other means via lower bounds or statistical thresholds. The idea is to lead to fast compression of chains within the input graph prior to application of the Louvain heuristic. We omit further details of this idea and for the purpose of this paper, we only consider the single degree version of the vertex following heuristic for implementation and experimental evaluation.

\begin{algorithm}[!htb]
\small
\caption{The parallel Louvain algorithm (a single phase).}
\label{algo.parLouvian}
\begin{algorithmic}[1]
\Procedure{Parallel Louvain}{$G(V,E,\omega), C$}

\For {{\bf each} $i\in V$ in {\tt parallel}}
   \State $C(i) \gets \{i\}$; $\ell(C(i)) \gets i$
   \State $C_{int}^{i} \gets 0$ \Comment{{\footnotesize counter for the \#intra-community edges due to $i$}}
   \For{{\bf each} $j \in \Gamma(i)$} $C_{tot}^{i} \gets C_{tot}^{i} + \omega(i,j)$ 
   \EndFor
\EndFor
\State $Q_C \gets 0$; $Q_P \gets -\infty$ \Comment{{\footnotesize Current \& previous modularity}}

\While{{\tt true}} \Comment{{\footnotesize Iterate until modularity gain becomes negligible.}}
   \For {{\bf each} $i\in V$ in {\tt parallel}}
      \State $C_{old} \gets C(i)$; $N_i\gets C(i)$
      \For{{\bf each} $j \in \Gamma(i)$} $N_i \gets N_i \cup C_j$
      \EndFor

      \State $maxGain \gets 0$; $C_{new} \gets C_{old}$ 
      \For {{\bf each} $c\in N_i$ in {\tt parallel}}
        \State $curGain \gets$ Calculate $\Delta Q_{i\rightarrow c}$ 
           \If{$((curGain > maxGain)$ or $ (curGain=maxGain$ and $\ell(c)<\ell(C_{new}) )$}\Comment{{\footnotesize Minimum label heuristic}}
              \State $maxGain \gets curGain$; $C_{new} \gets c$
           \EndIf
      \EndFor

      \If{$maxGain > 0$}
         \State $C_{old} \gets C_{old} \setminus \{i\}$; $C_{new} \gets C_{new} \cup \{i\}$ 
      \EndIf

   \EndFor

 \For {{\bf each} $c \in C$ {\tt AND} $c \neq \emptyset$ in {\tt parallel}}
    \State $C_{int}^{c} \gets 0$; $C_{tot}^{c} \gets 0$
 \EndFor

 \For {{\bf each} $(i,j) \in E$ in {\tt parallel}}
     \If{$C(i) = C(j)$}
         \State $C_{int}^{i} = C_{int}^{i} + \omega(i,j)$
         \State $C_{tot}^{i} = C_{tot}^{i} + \omega(i,j)$
     \Else
         \State $C_{tot}^{i} = C_{tot}^{i} + \omega(i,j)$
         \State $C_{tot}^{j} = C_{tot}^{j} + \omega(i,j)$
    \EndIf
 \EndFor

 \State $e_{xx} \gets 0$
 \State $a^2_x \gets 0$
 \For {{\bf each} $c \in C$ {\tt AND} $c \neq \emptyset$ in {\tt parallel}}
      \State $e_{xx} += C_{int}^{c}$; $a^2_x  += (C_{tot}^{c})^{2}$
 \EndFor

 \State $Q_C = \frac{e_{xx}}{m} -  \frac{a^2_x}{(2m)^2}$

 \If{$\vert \frac{Q_C - Q_P}{Q_P} \vert < \theta$} \Comment{{\footnotesize $\theta$ is a user specified threshold.}}
      \State {\tt break}  \Comment{{\footnotesize Phase termination}}
 \Else
      \State $Q_P \gets Q_C$ 
 \EndIf

\EndWhile
\EndProcedure
\end{algorithmic}
\end{algorithm}

\subsection{Parallel algorithm }
\label{secParallelAlgo}

Our parallel algorithm has the following major steps:

\begin{compactenum}[1)]
  \item{\underline{VF preprocessing (Optional):}} Apply the vertex following heuristic by merging all single degree vertices into their respective neighboring vertices  (as explained in Section~\ref{secMerging}). This step is performed in parallel. Label the resulting vertices from $1\ldots n$ using an arbitrary ordering. 
    \item{\underline{Coloring preprocessing (Optional):}} Color the input vertices using distance-k coloring. For this paper, we only explore distance-1 coloring. For coloring, we used the parallel implementation from \cite{catalyurek_graph_2012}. 
  \item{\underline{Phases:} } Execute phases one at a time as per Algorithm~\ref{algo.parLouvian}. Within each phase, the algorithm runs multiple iterations, with each iteration performing a parallel sweep of vertices without locks and using the community information available from the previous iteration. If coloring was applied, then the processing of each color set is parallelized internally and the community information from the previous coloring stages is available to make migration decisions in subsequent coloring stages. This is carried on until the modularity gain between successive iterations becomes negligible. 
  \item{\underline{Graph rebuilding:} } Between two successive phases, the community assignment output of the completed phase is used to construct the input graph for the next phase. This is done by representing all communities of the completed phase as ``vertices'' and accordingly introducing edges, identical to the manner in which it is done in the serial algorithm.    This step is also implemented in parallel as described in Section~\ref{secImplementation}.
\end{compactenum}
 


We note here that the above parallel algorithm, with the exception of coloring heuristic, is stable in that it always produces the same output regardless of the number of cores used.
When coloring is applied, the use of multiple threads within a given iteration could potentially vary the order in which decisions are made, thereby leading to potential variations in the output. In our experiments, we found the magnitudes of such variations to be negligible.

\subsection{Implementation}
\label{secImplementation}

We implemented our parallel heuristics in C++ and OpenMP. 
It is to be noted that the heuristics themselves are agnostic to the underlying  parallel architecture. 
There are a few implementation level variations to Algorithm~\ref{algo.parLouvian}. In Algorithm~\ref{algo.parLouvian} the modularity calculation happens in lines $27-31$. The steps from $18-26$ calculates the intra- and inter-community edge counts required for the modularity calculation. In our actual implementation we do not explicitly execute these steps. Instead we pre-aggregate these values in steps $9-17$ as the net modularity gains are being calculated for each vertex. 
This saves significant rework. 
Secondly, lines $16-17$ show the update for the source and target communities for each vertex $i$. 
We implemented these updates using intrinsic atomic operations {\tt \_\_sync\_fetch\_and\_add()} and 
{\tt \_\_sync\_fetch\_and\_sub()}. 

We use a compressed storage format for graph data structures that store the adjacency lists for all the vertices in a contiguous memory location.
Specific memory pointers for each vertex is maintained in a separate list. 
This format enables efficient access to neighborhood information for each vertex.
We use the C++ STL \texttt{map} data structure to store the set of unique clusters that a vertex is connected to (i.e., neighboring communities). The number of possible choices is upperbounded by the degree of a vertex initially and depending on how fast the algorithm converges from iteration to iteration, the number of choices decreases. Since this step appears in the computation for every vertex, we also experimented with several alternatives including the use of C++ STL \texttt{unordered\_map} data structure, but did not find any significant improvements in performance.

The step to rebuild the graph between consecutive phases is implemented in parallel and serial in parts.
This is achieved in a sequence of steps.  Assume that the phase transition is between phase $i-$ to $i$. We use $G_{i-1}$ and $G_{i}$ to refer to the graphs input to phases $i-1$ and $i$ respectively.
i) First, the set of vertices in $G_{i}$ is constructed from the communities output from phase $i-1$. Since many communities which existed at the start of phase $i-1$ could have become empty by the end of that phase, we first renumber of communities numerically, using only non-empty communities. This step is currently implemented in serial, although our future plan is to explore a parallelization using prefix computation-based approach. 
ii) In the next step, a STL \texttt{map} structure is allocated for every new vertex in $G_{i}$ to concisely store the set of neighboring communities attached to it. This step is parallel.
iii) In the following step, all edges in $G_{i-1}$ are traversed in parallel. If an edge is an intra-community edge, then the weight for the corresponding edge (connecting the community vertex to itself) in $G_{i}$ is updated. Alternatively, an inter-community edge leads to an update to each of the two corresponding community vertices in $G_{i}$. The former requires one lock and the latter requires two.

Our implementation is named {\em Grappolo}\footnote{Italian word meaning a cluster (of grapes)}. The software is available for download under the BSD 3-Clause license from here: 
\url{http://hpc.pnl.gov/people/hala/grappolo.html}. 

\subsection{Analysis}
\label{secAnalysis}

Within each  iteration (refer to Algorithm~\ref{algo.parLouvian}), the vertices are scanned in parallel, and for every vertex their vertex neighborhood is scanned first to curate the set of distinct neighboring communities (steps $9-10$). Subsequently, the main step of modularity gain calculation is performed only for each distinct neighboring community (steps $12-17$), which is equal to vertex degree initially but is expected to rapidly reduce as the iterations progress. Consequently, the worst-case runtime complexity \emph{per} iteration is $O(\max\{{\frac{M+n\cdot \lambda}{p},\lambda_{max}}\})$, where $p$ denote the number of processing cores, $\lambda$ is the average (unweighted) degree of a vertex and $\lambda_{max}$ is the maximum (unweighted) degree of a vertex. The space complexity is linear in the input for shared memory implementation (i.e., $O(m+n)$).
The above analysis assumes that the entire collectin of vertices is processed in one parallel step within each iteration. 
With the application of coloring, parallelism is limited to each color set, implying the number of color sets to correspond to the number of parallel steps within each iteration.

\section{Experimental evaluation}
\label{secExp}

\subsection{Experimental setup}
\label{secExpSetup}


The test platform for our experiments is an Intel Xeon X7560 server with four sockets and $256$ GB of memory. Each socket is equipped with eight cores running at $2.266$ GHz, leading to a total of $32$ cores. The system is equipped with $32$ KB of L1 and $256$ KB L2 caches per core, and $24$ MB of cache per socket. Each socket has $64$ GB of DDR3 memory with a peak bandwidth of $34.1$ GB per second. 
The software was compiled with GCC version 4.8.2 using \texttt{-Ofast} option. We also enabled non-uniform memory distribution using \texttt{numactl} command and enabled thread binding by using \texttt{GOMP\_CPU\_AFFINITY} environment variable. The thread binding variable was configured to place the threads across the system as evenly as possible with the goal of maximizing the memory bandwidth. All experiments were run using one thread per core.

We tested our heuristics on $11$ different real world input graphs, which are summarized in Table~\ref{tabInput}. With the exception of inputs labeled ``MG1'' and ``MG2'', all other inputs were downloaded from the DIMACS10 challenge website \cite{dimacs10_10th_????,bader_graph_2012}, and the University of Florida sparse matrix collection~\cite{davis_university_2011}. ``MG1'' and ``MG2'' are graphs constructed for two different ocean metagenomics data, using the construction procedure described in \cite{wu_pgraph:_2012}.

\begin{table}[tbh]
\caption{\label{tabInput}
Input statistics for the real world networks used in our experimental study.
``RSD'' represents the relative standard deviation of vertex degrees for each graph. 
It is given by the ratio between the standard deviation of the degree and its  mean.
}
\centering
\begin{tabular}{|l|r|r|r|r|r|}
\hline
Input &Num. &Num. & \multicolumn{3}{c|}{Degree statistics ($\lambda$)} \\
graph &vertices (n)&edges (M)& max. & avg. & RSD \\
\hline
\hline
CNR & 325,557 & 2,738,970 & 18,236 & 16.826 & 13.024 \\
\hline
coPapersDBLP & 540,486 & 15,245,729 & 3,299 & 56.414 & 1.174 \\
\hline
Channel & 4,802,000 & 42,681,372 & 18 & 17.776 & 0.061  \\
\hline
Europe-osm & 50,912,018 & 54,054,660 & 13 & 2.123 & 0.225 \\
\hline
Soc-LiveJournal1 & 4,847,571 & 68,475,391 & 22,887 & 28.251 & 2.553\\
\hline
MG1& 1,280,000 & 102,268,735 & 148,155 & 159.794 & 2.311\\
\hline
Rgg\_n\_2\_24\_s0 & 16,777,216 & 132,557,200 & 40 & 15.802 & 0.251\\
\hline
uk-2002 & 18,520,486 & 261,787,258 & 194,955 & 28.270 & 5.124 \\
\hline
NLPKKT240 & 27,993,600 & 373,239,376 & 27 & 26.666 & 0.083\\
\hline
MG2 & 11,005,829 & 674,142,381 & 5,466 & 122.506 & 2.370 \\
\hline
friendster & 51,952,104 & 1,801,014,245 & 8,603,554 & 69.333 & 17.354\\
\hline
\hline
\end{tabular}
\end{table}

The input graphs were tested using multiple variants of our implementation that use different combination of the proposed heuristics. These variants are as follows:
\begin{itemize}
\item{\bf baseline:} represents our parallel implementation with only the Minimum Labeling (ML) heuristic;
\item{\bf baseline+VF:} represents the baseline implementation with the application of the Vertex Following (VF) heuristic  in a preprocessing step.
There were a few inputs (viz., Channel, MG1, MG2) for which their single degree vertices had already been pruned off when their respective graphs were generated, and consequently their baseline runs are equivalent to their  baseline+VF runs\footnote{For this reason, we show only their baseline+VF runs in their respective charts.}. 
For the remaining inputs, VF preprocessing was run only once, prior to the start of the first phase;
\item{\bf baseline+VF+Color:} represents the baseline implementation with the application of both the VF and coloring heuristics (in that order).
Coloring was used as a preprocessing step for multiple phases until either the number of input vertices reduced below a preset cutoff (100K used for this paper) or the net modularity gain between phases is less than the user-defined threshold ($10^{-2}$).
Once either of these conditions is met, the implementation does not perform coloring anymore and the remaining phases are executed using a default net modularity gain threshold of $10^{-6}$ for termination.
\end{itemize}

\subsection{Performance evaluation}
\label{secPerf}

To assess the effectiveness of our parallel heuristics, we studied how quickly a given algorithm converges to its final modularity (as a function of the number of iterations) and compared it against the convergence rate of the corresponding serial Louvain\footnote{All references to the ``serial'' implementation in the experimental results section corresponds to the original Louvain implementation available from \cite{louvain_findcommunities_????}.} execution. 
We also compared the difference in runtimes and final modularities output by the individual approaches. 
Figures~\ref{figModCurves}-\ref{figModCurves4} show the evolution of modularity from the first iteration of the first phase to the last iteration of the last phase for all the 11 test inputs, and the parallel runtimes as a function of the number of cores.  



\begin{figure}[!htb]
\centering
\includegraphics[width=2.35in]{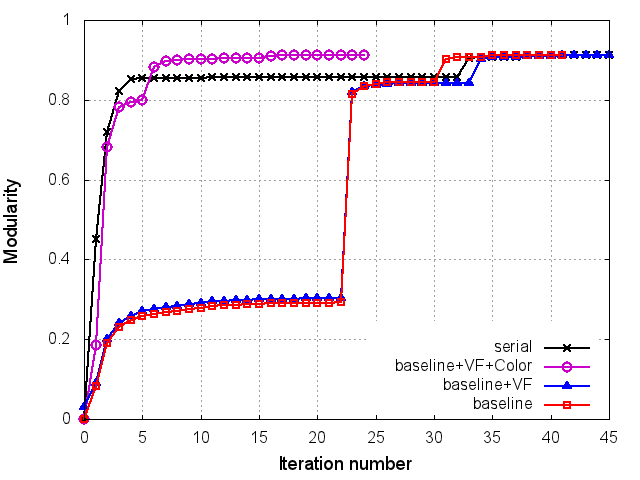}
\includegraphics[width=2.35in]{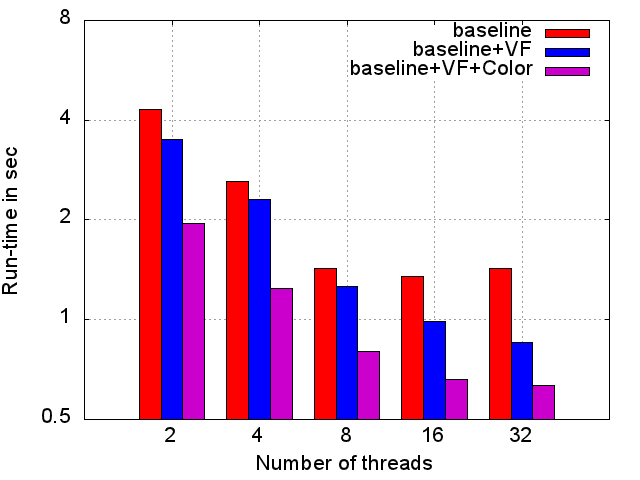}
\\
{\footnotesize
(a) Input: CNR 
}
\\
\includegraphics[width=2.35in]{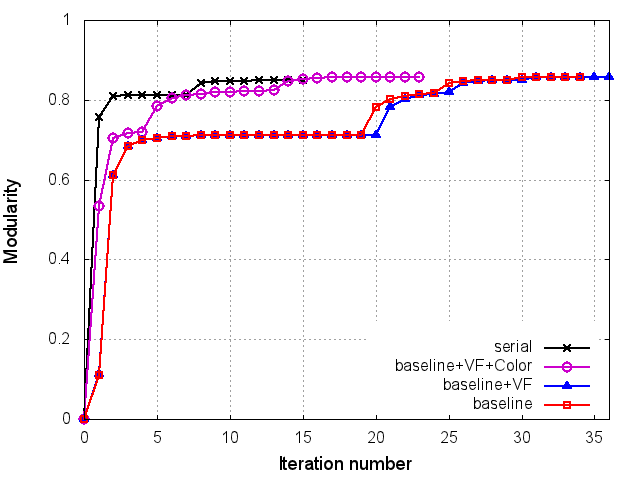}
\includegraphics[width=2.35in]{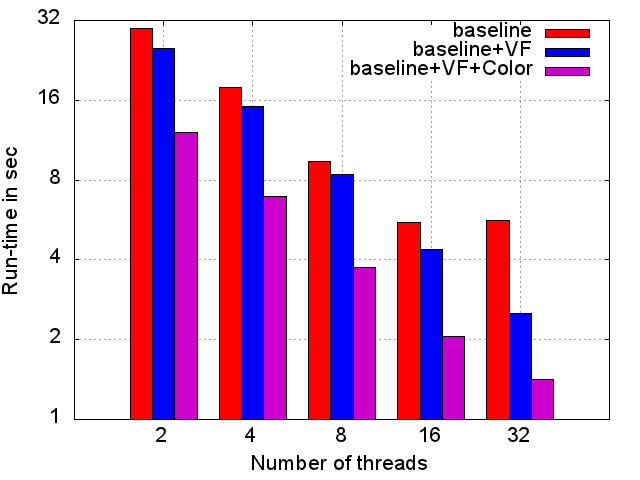}
\\
{\footnotesize
(b) Input: coPapersDBLP
}
\\
\includegraphics[width=2.35in]{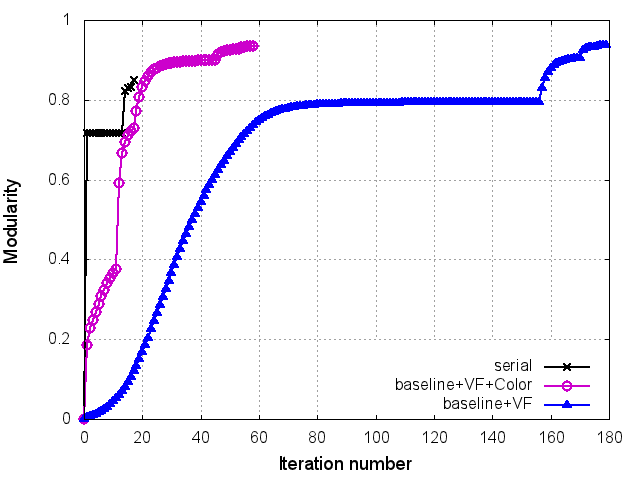}
\includegraphics[width=2.35in]{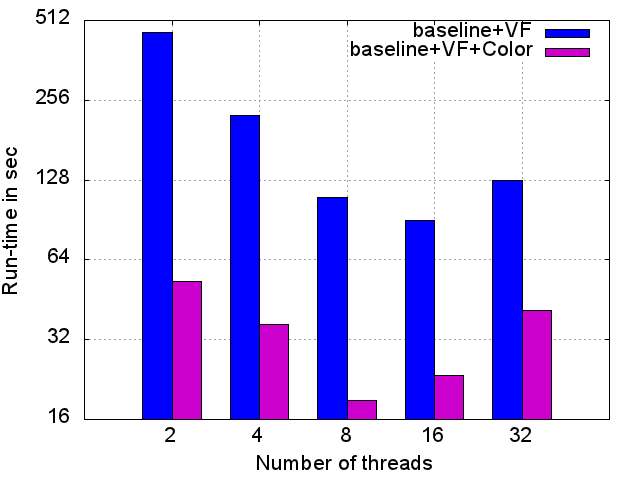}
\\
{\footnotesize
(c) Input: Channel
}
\caption{ 
Charts showing the evolution of modularity (left column) and the parallel runtime performance (right column) for each test input.  
The steep climbs in modularity visible in the modularity curves correspond to phase transitions. 
Also shown for comparison are the corresponding performance of the serial algorithm.
}\label{figModCurves}
\end{figure}


\begin{figure}[!htb]
\centering
\includegraphics[width=2.35in]{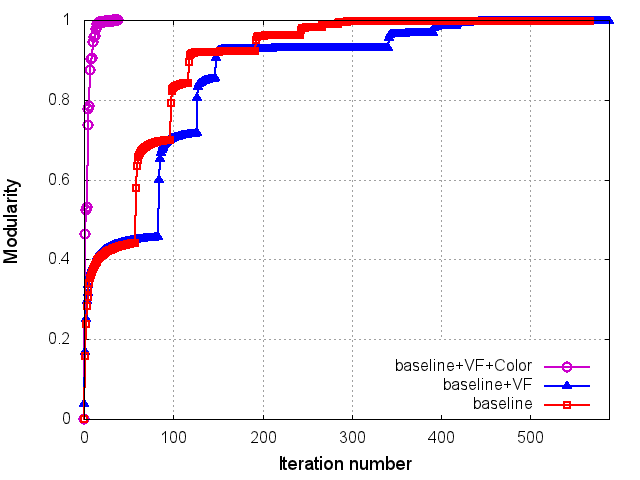}
\includegraphics[width=2.35in]{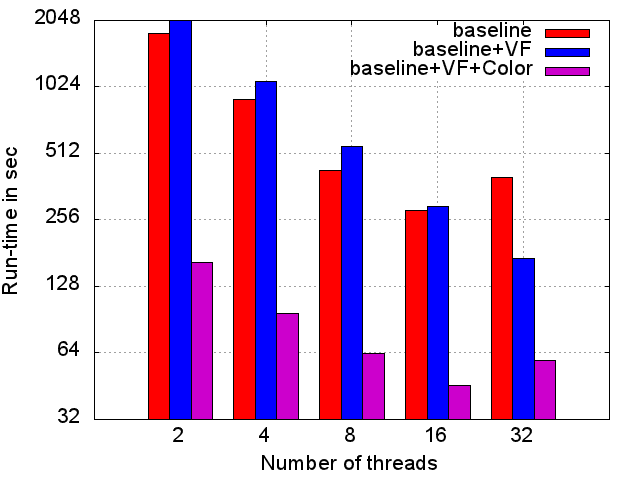}
\\
{\footnotesize
(d) Input: Europe-osm
}
\\
\includegraphics[width=2.35in]{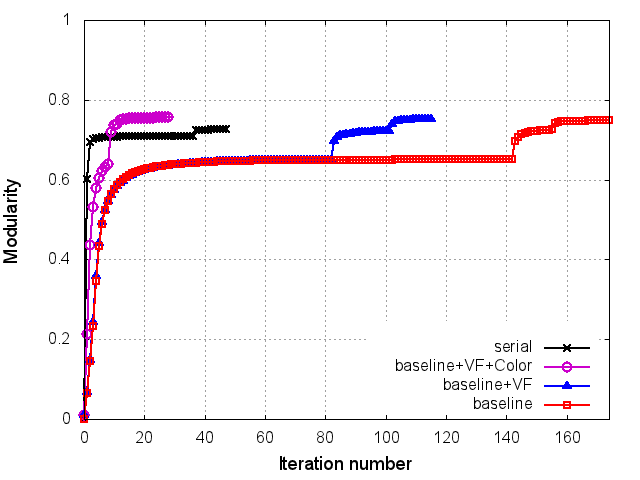}
\includegraphics[width=2.35in]{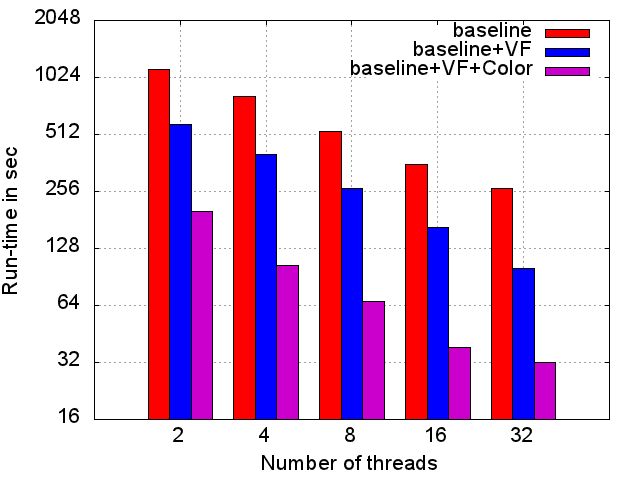}
\\
{\footnotesize
(e) Input: Soc-LiveJournal1
}
\\
\includegraphics[width=2.35in]{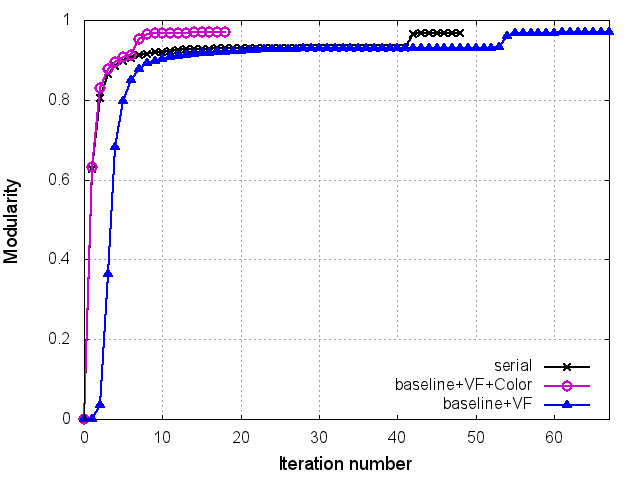}
\includegraphics[width=2.35in]{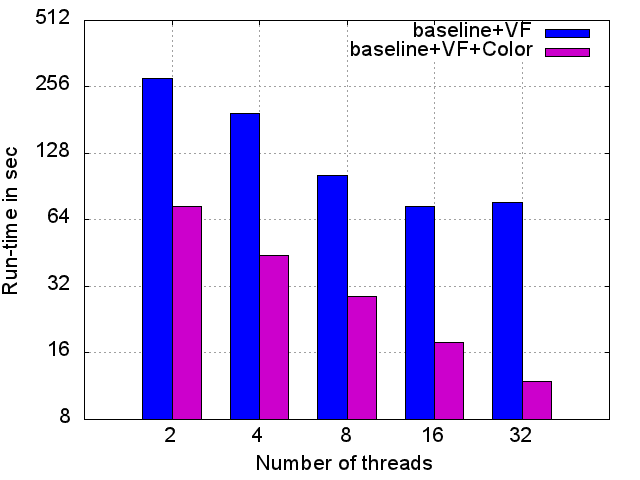}
\\
{\footnotesize
(f) Input: MG1
}
\caption{ 
Charts showing the evolution of modularity (left column) and the parallel runtime performance (right column) for each test input.  
}\label{figModCurves2}
\end{figure}


\begin{figure}[!htb]
\centering
\includegraphics[width=2.35in]{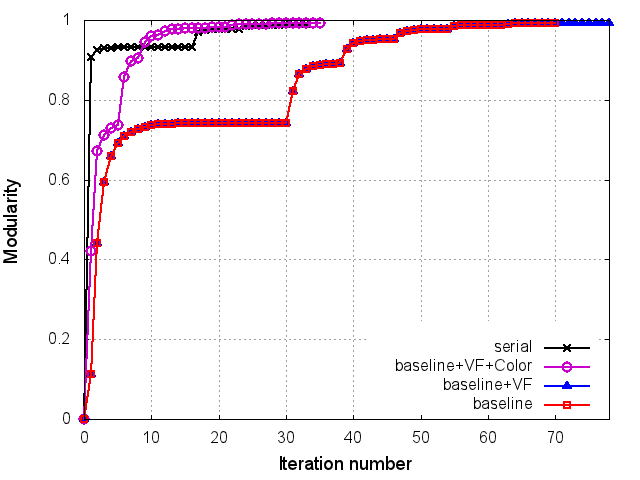}
\includegraphics[width=2.35in]{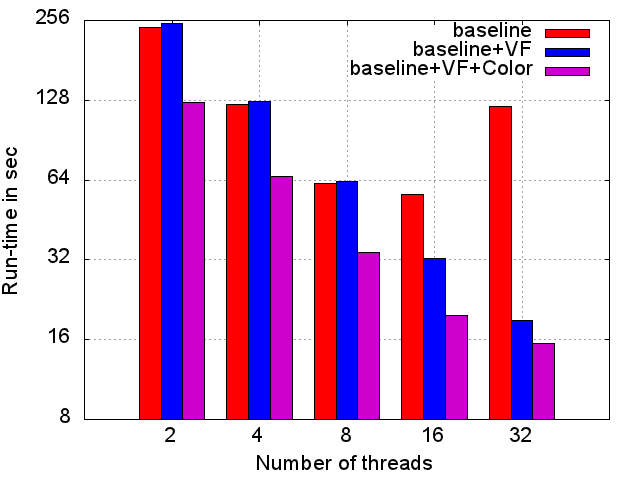}
\\
{\footnotesize
(g) Input: Rgg\_n\_2\_24\_s0 
}
\\
\includegraphics[width=2.35in]{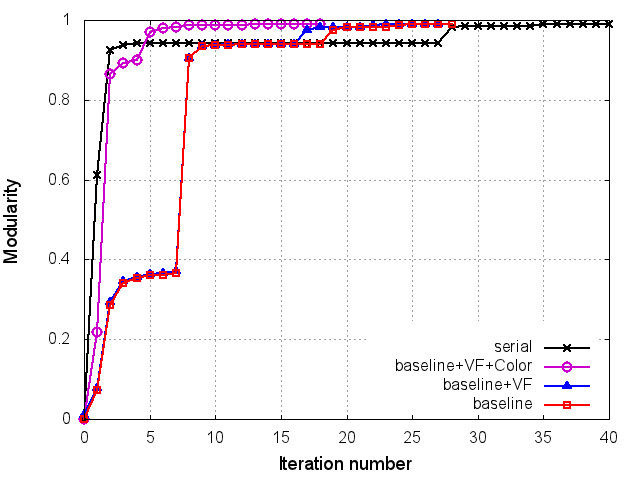}
\includegraphics[width=2.35in]{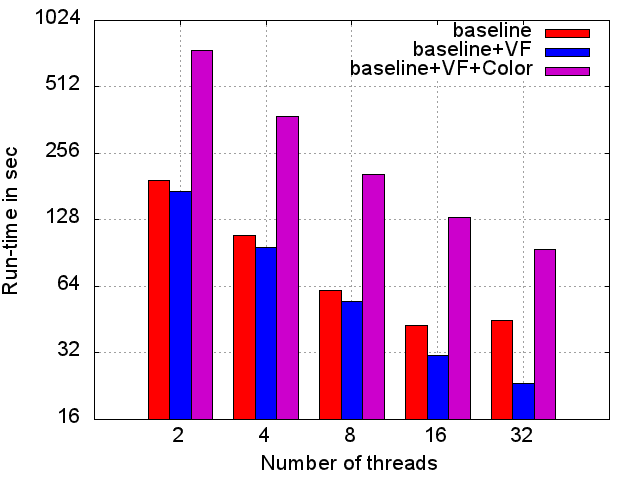}
\\
{\footnotesize
(h) Input: uk-2002
}
\\
\includegraphics[width=2.35in]{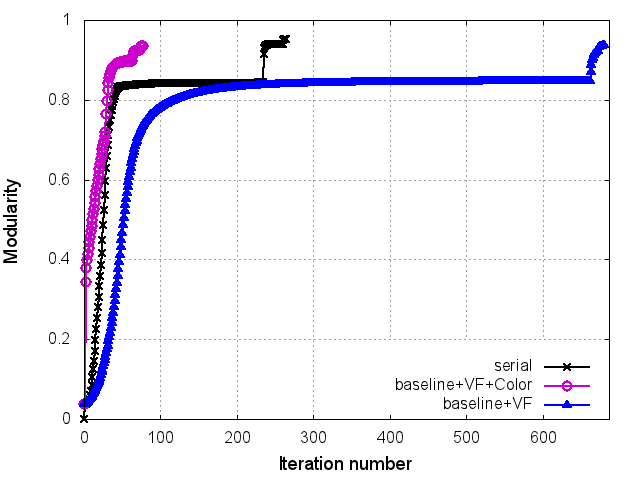}
\includegraphics[width=2.35in]{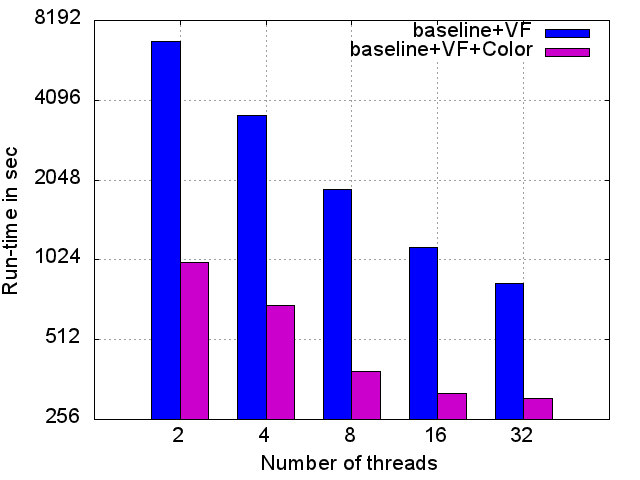}
\\
{\footnotesize
(i) Input: NLPKKT240
}
\caption{ 
Charts showing the evolution of modularity (left column) and the parallel runtime performance (right column) for each test input.  
}\label{figModCurves3}
\end{figure}


\begin{figure}[!htb]
\centering
\includegraphics[width=2.35in]{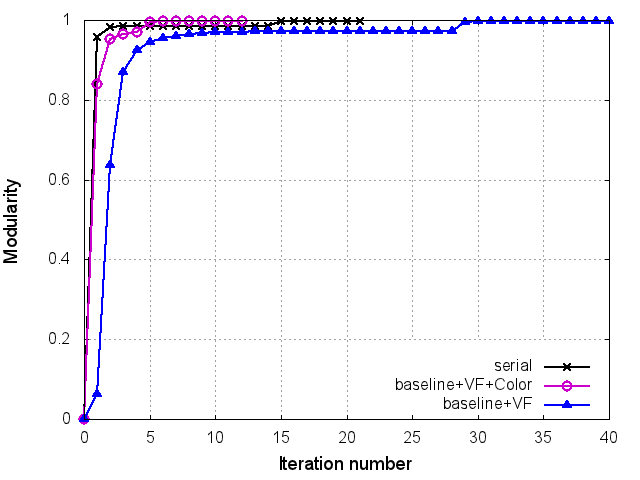}
\includegraphics[width=2.35in]{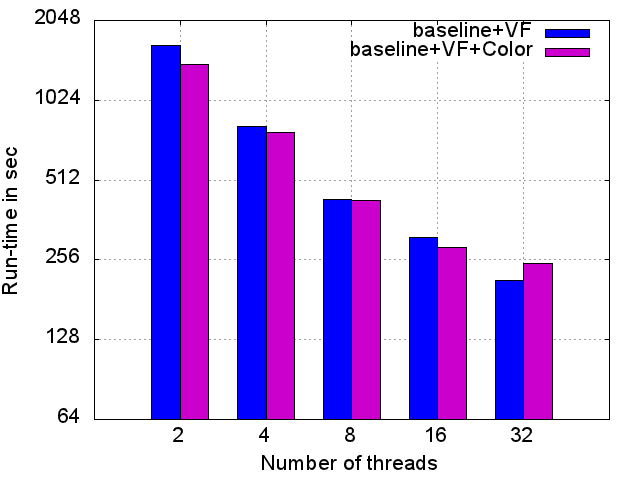}
\\
{\footnotesize
(j) Input: MG2
}
\\
\includegraphics[width=2.35in]{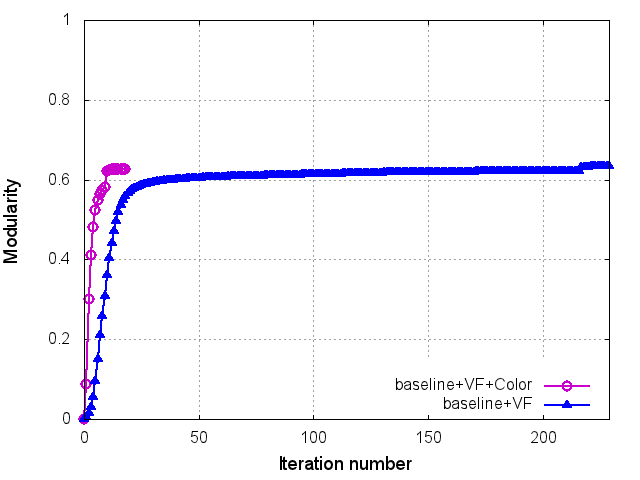}
\includegraphics[width=2.35in]{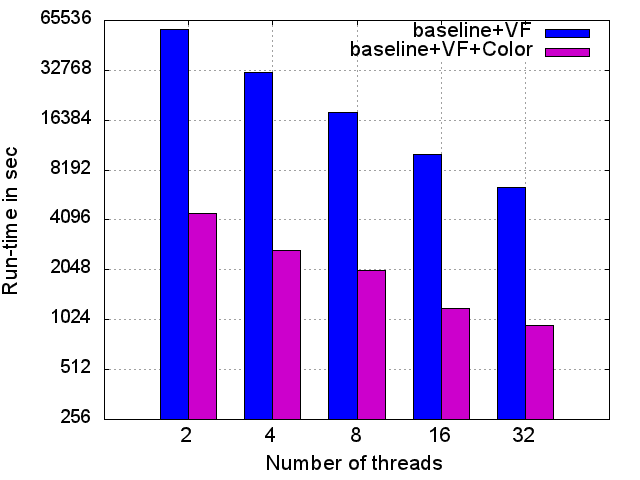}
\\
{\footnotesize
(k) Input: friendster
}
\caption{ 
Charts showing the evolution of modularity (left column) and the parallel runtime performance (right column) for each test input.  
}\label{figModCurves4}
\end{figure}

{\bf Effectiveness of the VF heuristic:} 
The run-time charts in Figures~\ref{figModCurves}-\ref{figModCurves4} show the effectiveness of the VF heuristic in reducing run-time relative to our baseline implementation.
The reduction in run-time can be attributed to the reduction in the number of vertices to be processed within each iteration. However, the effectiveness of the VF heuristic is also tied to the number of single degree vertices in the original input graph.
While our results show that VF is able to produce run-time savings in most input cases, there were two exceptions: Europe-osm (Figure~\ref{figModCurves2}d) and Rgg\_n\_2\_24\_s0 (Figure~\ref{figModCurves3}g), for which the run-time was observed to increase. Upon further investigation, we found that the application of VF for these two inputs indeed caused a reduction in the time spent per iteration as expected; however, it also led to prolonging the convergence of the algorithm within the intial phases --- i.e., it led to an increase in the number of iterations within a phase. 

This delay in convergence within a phase shows a potential drawback of the VF heuristic on some input cases that can be intuitively explained as follows: consider a chain of ``hub'' nodes where the hubs are individually connected to a number of single degree vertices (``spokes''). In such cases, the compacted representation that results from the application of VF would have more incentive to continue in the current phase by gradually collapsing the chain into larger communities and achieving smaller gains in modularity that still surpass the minimum required cutoff. This results in prolonging the termination of the current phase. In contrast, if we were to omit applying the VF heuristic on the input graph, then a hub node could potentially migrate into one of its spokes' communities and when that happens, there is an increased probability that the algorithm terminates the current phase sooner due to negligible modularity gain. While the resulting final modularity figures could be slightly lower than obtained with the application of VF, the gains in runtime may be more pronounced, which is what we observed for the two inputs Europe-osm and Rgg\_n\_2\_24\_s0. It is to this end, that the proposed extension of the VF heuristic that also seeks to compress paths (see discussion at the end of Section~\ref{secMerging}) could aid in obtaining a better balance between run-time benefit and modularity gain.

{\bf Effectiveness of coloring:} 
The design intent of coloring is to reduce the number of iterations required to converge on a solution, and in the process reduce the time to solution. However, a potential drawback of coloring is reduced parallelism within each iteration; more specifically, the presence of numerous small color sets could result in an under-utilization of threads. In our experimental results, we found coloring to be highly effective in reducing both the number of iterations \emph{and} the overall time to solution. The run-time improvements of \emph{baseline+VF+coloring} over \emph{baseline+VF} were anywhere from $\sim 3.48\times$ to $16.52\times$.  
However, the run-time improvements were either negligible in the case of MG2 (Figure~\ref{figModCurves4}j) or negative in the case of uk-2002 (Figure~\ref{figModCurves3}h). 
These observations correlate with the highly skewed color size distributions for these two graphs. For instance, 943 colors were used for uk-2002 in the first phase and the color sets had a high Relative Standard Deviation (RSD) of 18.876 in their sizes. We are exploring an alternative approaches to create balanced coloring sets that are targeted at addressing this performance issue. For all other inputs, however, the benefit of coloring is evident in the drastically reduced number of iterations for convergence and subsequent savings in the time to solution. 
These results also show the combined effect of applying both VF and coloring heuristics, as they yield an additive net gain in performance.


\subsubsection{Scaling and run-time results}
\label{secScaling}
Figure~\ref{figSpeedup} shows the speedup curves for our parallel implementation ({\it baseline+VF+Color}). Two speedup curves are shown: a) \emph{relative speedup}, which calculates the speedup of the parallel execution over the corresponding 2-thread run (discussed in this section); and b) \emph{absolute speedup}, which is the speedup calculated over the corresponding serial Louvain implementation's execution \cite{louvain_findcommunities_????} (to be discussed in Section~\ref{secSerialCompare}).

The relative speedup curves show that on most inputs, the parallel implementation continues to deliver increasing speedups up to 32 threads, although the speedups become sub-linear beyond 8 threads.
While the input sizes play a role, it can be observed from the results that the size alone is not the sole determinant of performance.
For instance, the implementation achieves higher peak relative speedups ($\sim$$8\times$) on some of the smaller inputs such as coPapersDBLP (540K vertices, 15M edges) and Rgg\_n\_2\_24\_s0 (16M vertices, 132M edges) than on a larger input such as NLPKKT240 (51M vertices, 1.8B edges). 
Parallel performance is affected by a combination of input characteristics and the serial bottlenecks within the parallel implementation.

Inputs Channel and NLPKKT240 have a low RSD in vertex degree distribution (Table~\ref{tabInput}) and also have a poor community structure (reflected in their low modularity scores). 
This combination leads to an increased number of iterations in the initial phases, as the algorithm continues within a phase albeit incremental modularity gains.
The increased number of iterations in the first phase in particular (where the graph size is the largest) adversely affects on performance. 
This is because within each iteration the step to recalculate the new modularity score involves updating community structures (internal edge and incident edge counts; lines $20-26$ in Algorithm~\ref{algo.parLouvian}); and as the number of communities begins to reduce in the later iterations of a phase, more parallel overhead due to locking is incurred. 

In contrast, consider the input $Rgg\_n\_2\_24\_s0$ which also has a low RSD in its vertex degree distribution but for which a superior parallel performance is observed. This input is a random geometric graph, which despite its uniform degree distribution, is also known to have a high community structure (reflected by its high modularity score). This attribute allows the algorithm to rapidly converge within the first phase, thereby aiding better overall parallel performance.


\begin{figure}[!thb]
\centering
\includegraphics[width=2.6in]{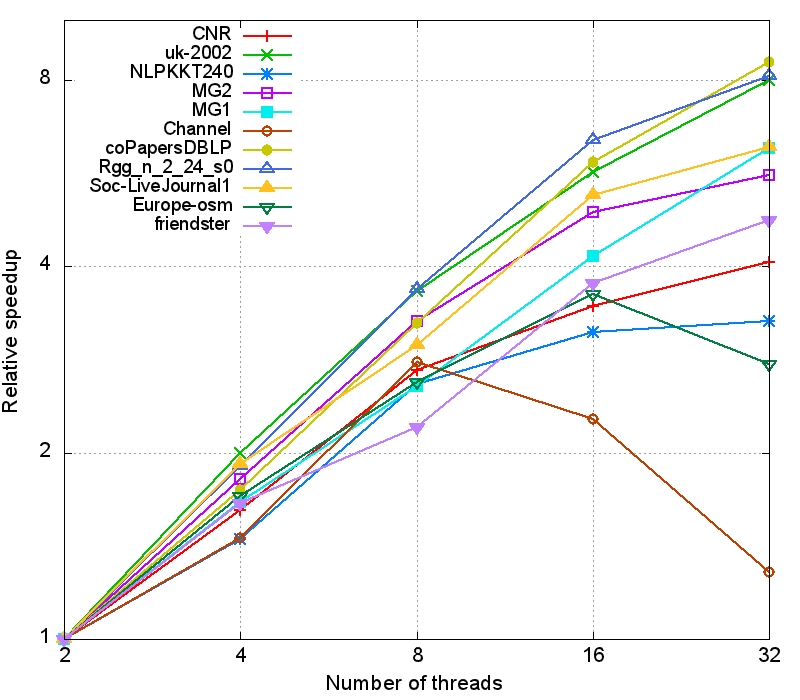}
\includegraphics[width=2.6in]{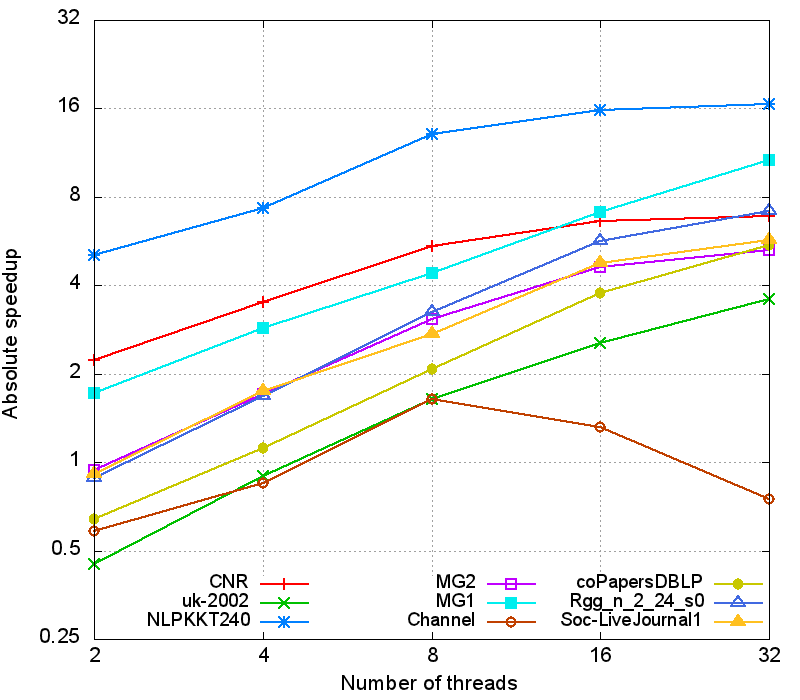}
\caption{ 
Speedup charts for our parallel implementation, {\it Grappolo}. The chart on left shows the relative speedup of the parallel implementation using the 2-thread run as the reference. The chart on the right shows the absolute speedup --- i.e., relative to the serial Louvain implementation \cite{louvain_findcommunities_????}.
All speedups are calculated using the {\it baseline+VF+Color} implementation of {\it Grappolo}.
Note that in the absolute speedup chart, curves for Europe-osm and friendster are not shown because the serial Louvain implementation failed to complete on these two inputs.
}\label{figSpeedup}
\end{figure}


Another significant contributing factor affecting parallel performance is the time taken to rebuild the graph between consecutive phases.  To analyze this effect, we recorded the breakdown of total run-time by the different phases of the parallel algorithm (described in Section~\ref{secParallelAlgo}).
Figure~\ref{figTimeBreakdown} shows the breakdown - viz. time to rebuild the graph between phases (VF cost is included here), time to perform coloring, and the remaining time attributed to performing the iterations (``clustering''). 
The charts (shown for four representative inputs) explain the discrepancies in scaling among the inputs.
For Rgg\_n\_2\_24\_s0  and MG2, we can see that the time spent in the main clustering iterations dominates, which is desirable from a scaling point of view. 
However, for inputs Europe-osm and NLPKKT240, an increasing portion of time is being spent in the rebuild phase with an increase in the number of cores. 
Given that our current implementation of the rebuild phase has serial bottlenecks (as explained in Section~\ref{secImplementation}), the speedups achieved for these inputs become sub-linear for higher number of cores. 
Figure~\ref{figRebuildAnalysis} confirms these observations about the rebuild phase.
More specifically, for inputs Europe-osm and NLPKKT240, the first phase ends in a low modularity (0.533470 and 0.038107 respectively), which implies that a dominant fraction of the edges remain as inter-community edges. In the graph rebuild phase, each such edge corresponds to two locks (one for each community) affecting parallel performance. In contrast, input MG2 ends with a high modularity score of 0.969587 resulting in an improved performance during the rebuild phase as well.

\begin{figure}[!t]
\centering
\includegraphics[width=2.35in]{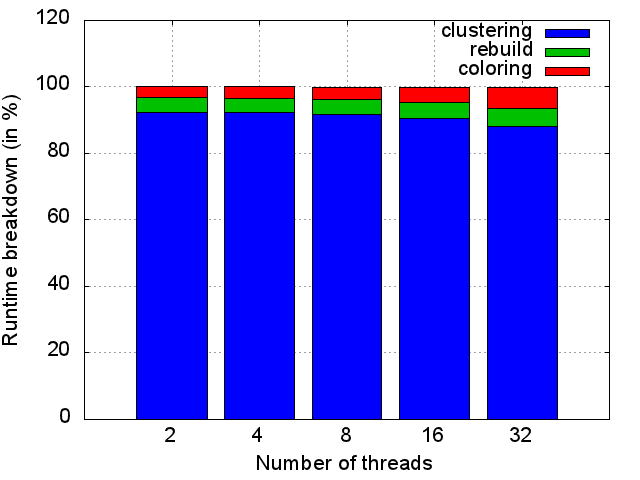}
\includegraphics[width=2.35in]{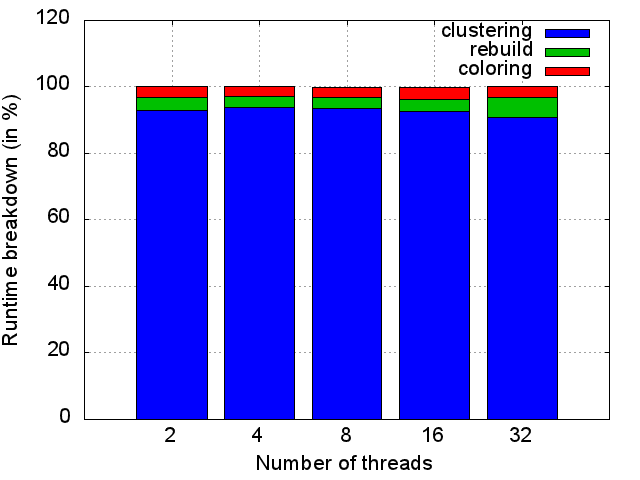}
\\
{\footnotesize
(a) Input: Rgg\_n\_2\_24\_s0 \hspace*{1.2in} (b) Input: MG2
}
\\
\includegraphics[width=2.35in]{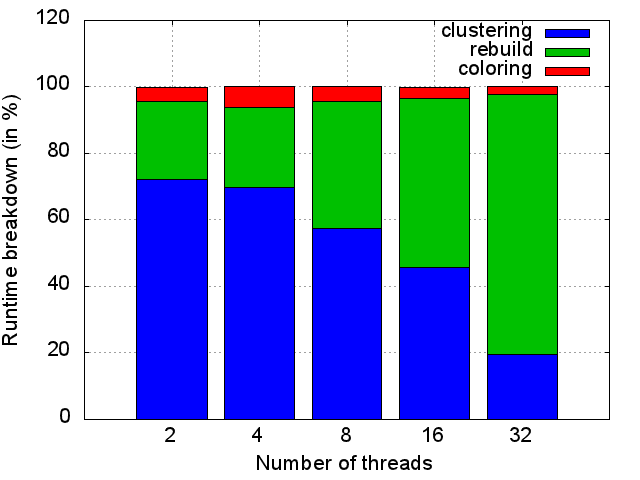}
\includegraphics[width=2.35in]{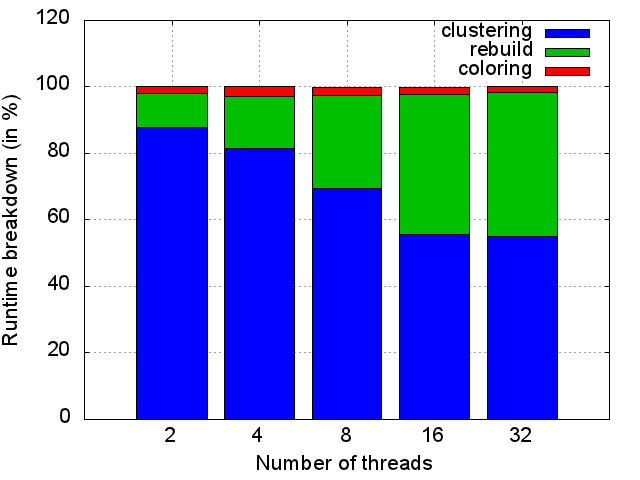}
\\
{\footnotesize
(c) Input: Europe-osm \hspace*{1.2in} (b) Input: NLPKKT240
}
\caption{ 
Breakdown of the parallel run-times by the different steps of the algorithm - viz. coloring, time to perform the graph transformations between phases, and the time spent in the iterations. The runs correspond to the {\it baseline+VF+Color} implementation.
}
\label{figTimeBreakdown}
\end{figure}

\begin{figure}[!thb]
\centering
{
\includegraphics[width=2.5in]{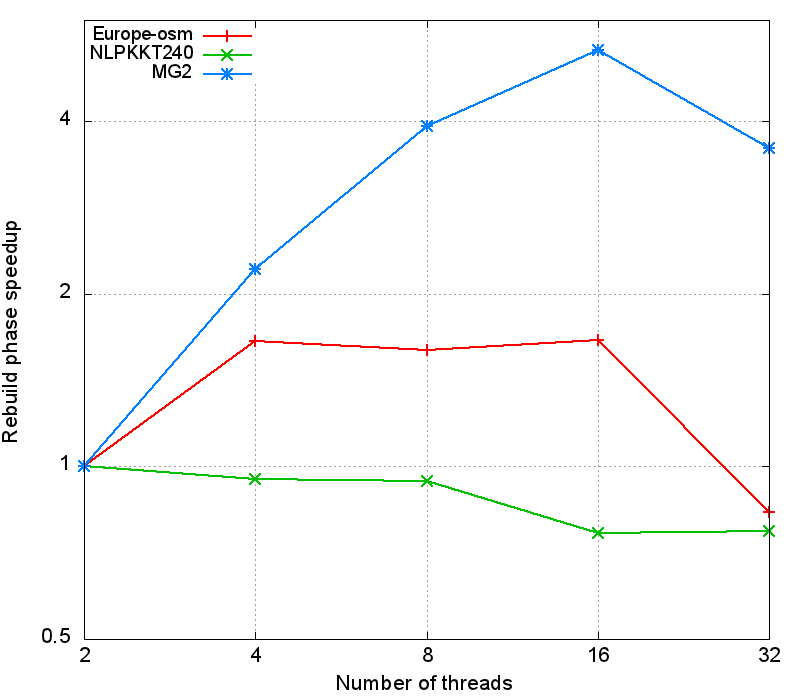}
}
\caption{ 
Chart showing the speedup curves for the graph rebuilding phase of our parallel algorithm.
}
\label{figRebuildAnalysis}
\end{figure}

\begin{table}[!tbh]
 
\caption{
Comparison of the  modularities and run-times achieved by our parallel implementation {\it baseline+VF+Color} (using 8 threads) against the corresponding values achieved by the serial Louvain implementation \cite{louvain_findcommunities_????}.
All runs were performed on the same test platform described under Experimental Setup.
The ``N/A' entries denote cases where the serial Louvain implementation did \emph{not} complete (i.e., crashed). It is to be noted that the serial Louvain implementation is a 32-bit implementation. 
}
\centering
\begin{tabular}{|l|r|r|r|r|r|}
\hline
\multirow{2}{*} {Input }&\multicolumn{2}{c|}{Output modularity}&\multicolumn{3}{c|}{Run-time (in sec)} \\
\cline{2-6}
&Parallel &Serial & Parallel &Serial  & Speedup\\
&&& (8 threads)&& (8 threads) \\
\hline
CNR&0.912608&{\bf 0.912784}&0.8&4.3 & 5.37$\times$\\
\hline
coPapersDBLP&{\bf 0.858088}&0.848702&3.7&7.7 & 2.08$\times$\\
\hline
Channel&{\bf 0.933388}&0.849672&21.2&30.9 & 1.45$\times$\\
\hline
Europe-osm&{\bf 0.994996}&N/A&63.4&N/A & N/A\\
\hline
MG1&{\bf 0.968723}&0.968671&28.8&126.6 & 4.39$\times$\\
\hline
uk-2002&0.989569&{\bf 0.9897}&210.3&335.9 & 1.59$\times$\\
\hline
MG2&0.998397&{\bf 0.998426}&457.8&1313.7 & 2.86$\times$\\
\hline
NLPKKT240&0.934717&{\bf 0.952104}&388.4& 5077.2 & 13.07$\times$\\
\hline
Rgg\_n\_2\_24\_s0&{\bf 0.992698}&0.989637&34.2&111.1 & 3.24$\times$\\
\hline
Soc-LiveJournal&{\bf 0.751404}&0.726785&67.05&182.7 & 2.72$\times$\\
\hline
friendster&{\bf 0.626139}&N/A&2036.8&N/A & N/A\\
\hline
\end{tabular}
\label{tabParSerial}
\end{table}

\subsubsection{Comparison to serial Louvain}
\label{secSerialCompare}

We also comparatively evaluated the performance of our parallel implementations proposed in this paper against the publicly available serial Louvain distribution \cite{louvain_findcommunities_????}.
Figure~\ref{figSpeedup} shows the absolute speedup achieved over the serial implementation for 9 out of the 11 inputs. (For the remaining two inputs, Europe-osm and friendster, the serial implementation failed to run.) 
Table~\ref{tabParSerial} compares the final modularities achieved by both implementations and also the corresponding run-times. 
For 7 out of the 11 inputs, our parallel implementation delivers higher modularity compared to the serial implementation in shorter time to solution. 
For example, this difference is as much as $>$0.1 for coPapersDBLP and $>0.08$ for Channel. 
Even in 3 out of the 4 cases where the serial implementation delivers higher modularity, the modularities reported by both methods agree up to the first three decimal places. 
Note that the heuristic nature of the algorithm combined with the parallel ordering of vertices which 
could differ from the serial ordering imply that serial and parallel results cannot be guaranteed to be identical. Our results demonstrate that parallelization is at least capable of preserving (if not surpassing) output quality for most of the inputs tested.

As for the run-times, our parallel implementation delivers absolute speedups in the range of  $1.45\times$ to $13.07\times$ using 8 threads.  
Larger speedups were observed using more number of threads, as can be observed from
the absolute speedup chart in Figure~\ref{figSpeedup}. 
A top speedup of 16.51$\times$ was observed for the NLPKKT240 input using 32 cores. 
The two cases where we observe low speedups --- Channel (1.45$\times$)  and uk-2002 (1.59$\times$) --- represent two different cases. 
For the Channel input, observe from Table~\ref{tabInput} that the degree distribution is highly uniform. This could cause vertices to migrate to any one of the neighboring communities and therefore the vertex ordering is expected to have a more pronounced effect on the convergence rate. It is for this reason that the serial implementation, which uses an arbitrary ordering, converges faster albeit with a lower modularity, while our parallel implementation with coloring takes more iterations to converge and does so with a higher modularity.
For uk-2002, the skew in the color set size distribution is the reason behind low speedup (as was explained earlier in the section).

\subsubsection{Performance charts and qualitative evaluation}
\label{secPerfCharts}

Figure~\ref{figPerfCharts} shows the relative performance profiles among the three parallel implementations -- {\it baseline}, {\it baseline+VF}, and {\it baseline+VF+Color} -- along with the serial Louvain implementation for the collection of inputs tested. For plotting these performance charts, we used results from all 9 inputs for which we had results from both serial and parallel implementations. The X-axis represents the factor by which a given scheme fares relative to the best performing scheme for that particular input. The Y-axis represents the fraction of problems (i.e., inputs). The closer a heuristic curve is to the Y-axis the more superior its performance is relative to the other schemes over a wider range of inputs. 
Also, in these performance charts, the order in which inputs appear along each curve is strictly a function of that corresponding heuristic's relative performance to the other schemes --- i.e., the points along a curve are sorted from the corresponding heuristic's best to worst performing inputs. Thus, the charts illustrate the relative performance of each scheme over other schemes for the {\it collection} of 9 inputs tested (as opposed to the individual inputs).

The following observations can be made from the two performance charts. 
The {\it baseline+VF+Color} shows an overall run-time performance advantage over all other schemes.
For instance, consider the run-time curve for {\it baseline+VF+Color} in Figure~\ref{figPerfCharts}b. This implementation outperforms all other heuristics for about 70\% of the problems, about 1.5$\times$ worse compared to a best performing implementation for 20\% of the problems, and 3$\times$ worse than the best for 10 percent of the problems. Similarly, the serial implementation is the slowest ranging from $2\times-5\times$ relative to other best performance schemes. From a modularity standpoint, all parallel heuristics perform comparably to serial method across the input set.

\begin{figure}[!tbh]
\centering
\includegraphics[width=2.50in]{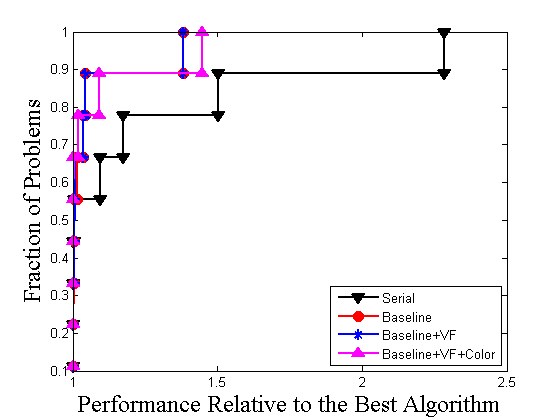}
\includegraphics[width=2.50in]{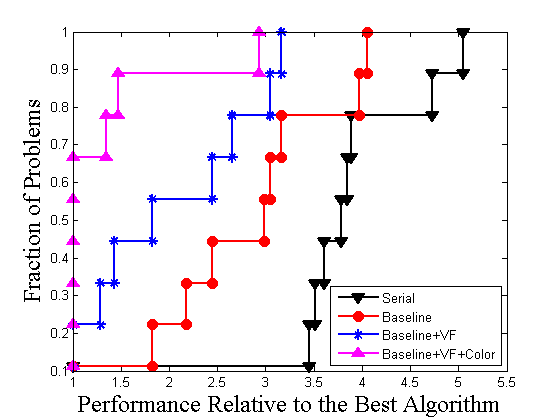}
\\
{\footnotesize
(a) Modularity profile \hspace*{1.2in} (b) Run-time profile
}
\caption{
Relative profile of performance for three combinations of heuristics: The relative performance of different heuristics and serial implementation for the test problems with respect to the best algorithm for a given problem. Europe-osm and friendster are not included in the comparison because the serial Louvain implementation crashes on those inputs. Final modularity scores are shown in the figure on left (part a), and run-times are shown on the right (part b). Run-time results from 32 thread runs were used to plot curves for the parallel heuristics.
It is to be noted that the longer a heuristic's curve stays near the Y-axis the more superior its performance relative to the other schemes over a wider range of inputs.
}
\label{figPerfCharts}
\end{figure}

{\bf Qualitative comparison:}
In addition to comparing modularities, we also compared the sets of communities by their composition generated by the parallel and serial implementations. The methodology for comparison is as follows. Let $S$ denote the set of communities generated by the serial implementation; and $P$ denote the set of communities generated by one of our parallel implementations --- we used results from the \emph{baseline+VF+Color} for this purpose. Treating the serial output as the ``benchmark'' we compared the parallel output against it as follows. Any vertex pair $(u,v)$ can be categorized into one of the four following bins: 
\begin{itemize}
\item{\bf True Positive (TP):} if $u$ and $v$ belong to the same community in both partitions;
\item{\bf False Positive (FP):} if $u$ and $v$ belong to the same community only in partition $P$;
\item{\bf False Negative (FN):} if $u$ and $v$ belong to the same community only in partitions $S$;
\item{\bf True Negative (TN):} if $u$ and $v$ belong to two different communities in both partitions;
\end{itemize}

Based on the above measures, more qualitative measures, viz. specificity (SP), sensitivity (SE), overlap quality (OQ) and Rand Index, can be calculatated as follows:
$SP = \frac{TP}{TP+FP}$, $SE = \frac{TP}{TP+FN}$, $OQ = \frac{TP}{TP+FP+FN}$, and Rand index $= \frac{TP+TN}{TP+FP+FN+TN}$.

Note that if both results match identically, all these measures will evaluate to 100\%. 
Also note that this comparison takes $\Theta(n^2)$ time because there are $n\choose 2$  pairs.
For this reason, we performed this qualitative comparison only for two of the inputs --- CNR and MG1.

Table~\ref{tabQualitive} shows the results of our comparative study. 
There are two observations that one can make from these results.
First, as can be expected, the partitioning produced by the two methods are different. However, the fact that there is \emph{no} explicit biasing toward false positives or false negatives implies that the cores of communities captured by both methods agree to a large extent --- the OQ values reflect the degree of this agreement. 
Secondly, given that these two partitioning yield nearly identical modularities imply that the vertex pairs consistently grouped by both schemes (i.e., True Positives) contribute to the bulk of the modularity score.

\begin{table}[!htb]
\caption{
Qualitative comparison between the parallel and serial community outputs by their composition. 
}
\centering
\begin{tabular}{|l|r|r|r|r|}
\hline
Input &SP&SE&OQ&Rand index\\

\hline
CNR&83.41\%&89.71\%&76.13\%&99.42\%\\
\hline
MG1&99.60\%&99.83\%&99.43\%&100.00\%\\
\hline
\end{tabular}
\label{tabQualitive}
\end{table}

\subsection{Effect of multiphase coloring}
\label{secPhaseColoring}

\begin{table}[!htb]
\caption{
Comparative results showing the effect of using coloring for only the first phase input vs. for multiple phases of the parallel algorithm.
The multi-phase coloring scheme is same as the {\it baseline+VF+Color} scheme.
All run-times are reported in seconds for runs corresponding to two threads.
}
\centering
\begin{tabular}{|l|r|r|r|r|}
\hline
\multirow{3}{*} {Input}&\multicolumn{2}{c|}{First phase coloring}&\multicolumn{2}{c|}{Multi-phase coloring}\\
\cline{2-5}
& [Min.,Max.] &Run-time  & [Min.,Max.] &Run-time  \\
& Modularity& (\#iter) & Modularity&  (\#iter)\\
\hline
Channel& [0.9344,0.9352] & 103.22 (96) & [0.9304,0.9333] & 52.96 (58) \\
\hline
uk-2002& [0.9895,0.9895] & 670.12 (18) & [0.9894,0.9895] & 748.15 (18) \\
\hline
Europe-osm& [0.9988,0.9988] & 759.94 (306) & [0.9988,0.9989] & 118.97 (38) \\
\hline
MG2& [0.9984,0.9984] & 1422.75 (14) & [0.9984,0.9984] & 1397.90 (12) \\
\hline
\end{tabular}
\label{tabPhaseColoring}
\end{table}

Coloring can be potentially applied to preprocess the input for any phase of the algorithm. However, the time spent coloring is an overhead and a colored graph exposes less parallelism. Therefore, it can be expected that the benefits of coloring, which is to hasten convergence, is expected to diminish as phases progress and the transformed graph becomes smaller. It is for this reason we used a scheme in which coloring is applied until either the number of input vertices reduces below a cutoff (100K for our experiments) or the net modularity gain between phases diminishes below a relatively higher threshold ($10^{-2}$) as described in Section~\ref{secExpSetup}. However, to clearly demonstrate the effect of coloring multiple phases, we devised an alternative implementation in which coloring is applied only to the first phase input. 
The goal was to observe differences in reported modularity and run-times between the two schemes.

Table~\ref{tabPhaseColoring} shows the effect of coloring single phase to multiphase. 
Inputs picked are those for which at least two phases of coloring was applicable.
For the other inputs, the results are identical between single phase and multiphase coloring schemes.
The results demonstrate the benefit of multi-phase coloring as it produces highly comparable modularities over multiple experiments while reducing time-to-solution, for all inputs except uk-2002.

\subsection{Effect of varying the modularity gain threshold}
\label{secThresholdEffect}

\begin{table}[!htb]
\caption{
Table showing the effect of varying the modularity gain threshold.
Two sets of experiments were performed, each running the {\it baseline+VF+Color} implementation, while one using $10^{-2}$ and another $10^{-4}$ as the value for the modularity gain threshold used within the colored phases.
}
\centering
\begin{tabular}{|l|r|r|r|r|}
\hline
\multirow{3}{*} {Input}&\multicolumn{2}{c|}{Threshold = $10^{-4}$}&\multicolumn{2}{c|}{Threshold = $10^{-2}$}\\
\cline{2-5}
& [Min.,Max.] &Run-time  & [Min.,Max.] &Run-time  \\
& Modularity& (\#iter) & Modularity&  (\#iter)\\
\hline
CNR& [0.9125,0.9125] & 5.00 (48) & [0.9125,0.9126] & 1.77 (24)\\
\hline
CoPaperDBLP & [0.8555,0.8577] & 16.17 (27) & [0.8570,0.8580] & 10.64 (23) \\
\hline
Channel & [0.9423,0.9485] & 816.79 (282) & [0.9304,0.9333] & 52.96 (58) \\
\hline
Europe-osm & [0.9989,0.9989] & 250.62 (56) & [0.9947,0.9949] & 125.35 (17) \\
\hline
MG1 & [0.9687,0.9687] & 271.23 (41) & [0.9687,0.9687] & 73.80 (18) \\
\hline
Rgg\_n\_2\_24\_s0 & [0.9926,0.9927] & 227.03 (52) & [0.9926,0.9926] & 118.21 (35) \\
\hline
uk-2002 & [0.9895,0.9896] & 1768.73 (22) &  [0.9894,0.9895] & 748.15 (18) \\
\hline
Nlpktt240 & [0.9426,0.9476] & 3563.41 (147) & [0.9319,0.9347] & 880.94 (78) \\
\hline
MG2& [0.9984,0.9984] & 2652.37 (16) & [0.9983,0.9983] & 1312.44 (7) \\
\hline
\end{tabular}
\label{tabThresholdEffect}
\end{table}

We also studied the effect of varying the modularity gain threshold used within the coloring phases. 
Using a larger value of threshold may prompt phase transitions to happen earlier (and possibly faster convergence) but at the possible expense of the final output modularity. On the other hand, a smaller value could help improve gains within phases but also could prolong phase transitions and eventual completion. 
Two sets of experiments were performed, using values of $10^{-2}$ and $10^{-4}$ for the threshold and the results are summarized in Table~\ref{tabThresholdEffect}.  
As can be observed, the modularities achieved by both schemes are highly comparable, while there is a marked run-time advantage if the threshold is higher.
This study shows that the run-time benefit of using a higher threshold outweighs the qualitative gains of using a lower threshold, at least for the threshold values compared. 

From a modularity standpoint, coloring has a more pronounced effect than the threshold used. The charts in Figure~\ref{figModCurves}a,d,e illustrate this effect --- observe that coloring provides substantail increases in the modularity at the initial phases of the algorithm \emph{before} a finer modularity threshold could take effect in the later phases. 

\section{Related work}
\label{secRelated}

For an extensive review on community detection methods and comparisons, please refer to \cite{fortunato_community_2010,newman_structure_2003}.
Although the notion of community detection is not new, the field took a significant shape with the introduction of the modularity measure to quantify the quality of community outputs by Newman and Girvan in 2004 \cite{newman_finding_2004}.  Newman's pioneering works on discovering community structure from networks also included developing both divisive  \cite{newman_finding_2004,newman_analysis_2004} and agglomerative \cite{clauset_finding_2004} clustering methods. The divisive method use the edge betweenness centrality index to detect bridges between communities but due to the underlying computation involved, it is also very slow ($O(n^3)$ for sparse inputs), limiting its scalability to sparse networks with tens of thousands of vertices.  The other class of algorithms use an agglomerative clustering approach where at any stage a greedy merging is performed between any two communities that provide the maximum modularity gain. This technique was originally introduced by the classical Clauset-Newman-Moore (CNM) algorithm \cite{clauset_finding_2004} and since been adopted/tailored into many other methods (e.g., \cite{wakita_finding_2007}).  With an average time complexity of $O(n \log^2n)$ this approach  have shown better scaling to networks containing $\times 10^5-10^6$ nodes and $\times 10^6-10^7$ edges.  The Louvain method \cite{blondel_fast_2008} can also be thought of as a variant of this agglomerative strategy but with the key differences being that  instead of carrying out the merging at a community-to-community level, the Louvain heuristic allows vertices to independently make decisions from within each community at every time step, and with a flexibility for those decisions to be undone at later iterations. Although input dependent, it has been shown that the Louvain approach is able to produce communities with better modularity scores than the other agglomerative strategies. On the other hand, the cluster hierarchies produced by agglomerative techniques tend to be more meaningful.

In the past few years, there have been several efforts in parallelizing modularity-based community detection. As part of the DIMACS10 clustering challenge, Riedy {\it et al.} presented a highly parallel agglomerative implementation for the CNM algorithm\cite{riedy_scalable_2012,riedy_parallel_2012}. 
Auer and Bisseling \cite{auer_graph_2012} present another way to achieve agglomerative clustering on GPUs using graph coarsening. 
In a recent study, Bhowmick and Srinivasan \cite{bhowmick_template_2013} present a shared memory parallel algorithm for the Louvain method. Their approach is to update the community structures on-the-fly from within each iteration as vertices are evaluated in parallel. This creates a need to introduce critical sections in parts which limits the method's scalability to small synthetic inputs ($\times 10^4$ vertices). The modularities reported also show variability across the processor spectrum. 

There are two parallel efforts to this paper that also describe parallelization of the Louvain algorithm. 
The work by Wickramaarachchi {\it et al.} \cite{wickramaarachchi_fast_2014} targets distributed memory parallelism, with the primary approach being to use a graph partitioner to partition the input graph {\it a priori} and subsequently run the sequential algorithm on each part separately (ignoring the contribution from cross-partition edges) before merging the results through an aggregation process at a master processor. 
In another parallel effort, Staudt and Meyerhenke \cite{staudt_engineering_2013} present an alternative approach called {\it PLM} that uses label propagation to parallelize the Louvain method. 
A comparison of our parallel results with their published results reveals that our parallel implementation {\it baseline+VF+Color} delivers higher modularity than PLM for the inputs both tested --- viz. coPapersDBLP, uk-2002, and Soc-LiveJournal. 
With respect to the run-time performance, a more direct comparison of the two methods on the same platform is required to enable a fair comparison. 

\section{Conclusion}
\label{secConclusion}

In this paper, we introduced effective heuristics for parallelizing an important and widely used community detection method --- the  Louvain method. We attempted to address the dual objectives of maximizing concurrency, and retaining the quality with respect to the serial implementation. To this end, we made two main contributions in this paper. First, we presented 
a detailed discussion of the challenges pertaining to parallelization of the Louvain algorithm 
for community detection, and described effective heuristics to extract parallelism from the algorithm.
Second, we empirically supported the observations with a set of carefully conducted experiments using 11 real-world networks representing a diverse set of application domains. Compared to the serial Louvain implementation \cite{louvain_findcommunities_????}, our parallel implementation is able to produce community outputs with a higher modularity for many of the inputs tested, in comparable number of iterations, while providing real speedups of up to $16\times$ using 32 threads. In addition, our parallel implementation was able to scale linearly up to 16 threads for larger inputs. 

We believe that the mathematical discussion, heuristics, and experimental evidence provided in this paper will benefit a wide range of researchers dealing with increasingly larger data sets and continually weaker serial hardware performance.  
Our future work include: i) extending the experiments to larger-scale inputs with tens of billions of edges and targeting community detection in real-time;  ii) a more thorough comparison of communities produced by the serial and different parallel implementations by delineating differences by composition; iii) investigating the value of the vertex following heuristic in the context of the serial Louvain algorithm and other modularity-based community detection algorithms; and iv) extension of our parallel algorithms to account for alternative modularity definitions (e.g., \cite{traag_narrow_2011}) in order to overcome the known resolution-limit issues of the standard modularity definition used in this paper.


\section*{Acknowledgment}

The authors would like to thank Drs. Emilie Hogan and Daniel Chavarr\'ia for input. The research was in part supported by DOE award DE-SC-0006516, NSF award IIS 0916463, and the Center for Adaptive Super Computing Software Multithreaded Architectures (CASS-MT) at the U.S. Department of Energy Pacific Northwest National
Laboratory (PNNL). PNNL is operated by Battelle Memorial Institute under Contract DE-AC06-76RL01830.
A preliminary version of this paper appeared in \cite{lu_parallel_2014}.




\bibliographystyle{elsarticle-num} 
\bibliography{MTAAP14}
\end{document}